\documentclass{article}

\usepackage{arxiv}

\usepackage[utf8]{inputenc} %
\usepackage[T1]{fontenc}    %
\usepackage{hyperref}       %
\usepackage{url}            %
\usepackage{booktabs}       %
\usepackage{amsfonts}       %
\usepackage{nicefrac}       %
\usepackage{microtype}      %
\usepackage{lipsum}
\usepackage{graphicx}

\usepackage{booktabs}
\usepackage{diagbox}
\usepackage{eccvabbrv}
\usepackage{arydshln}
\usepackage{amsmath}
\usepackage{graphicx}
\usepackage{algorithm}
\usepackage{algpseudocode}
\usepackage{bm}
\usepackage[normalem]{ulem}
\usepackage{wrapfig}
\usepackage{xcolor}
\usepackage{amsthm}
\newtheorem{theorem}{Theorem}
\newtheorem{lemma}{Lemma}
\usepackage{amssymb}

\title{Provable and Robust Wavefront Sensing via Self-Reference Interferometry}

\author{
Nebiyou Yismaw \\
  University of California, Riverside\\
  \texttt{nyism001@ucr.edu} \\
   \And
Vishwanath Saragadam\\
  University of California, Riverside\\
  \texttt{vishwans@ucr.edu} \\
  \AND
Aswin C. Sankaranarayanan \\
  Carnegie Mellon University\\
  \texttt{saswin@andrew.cmu.edu} \\
  \And
  M. Salman Asif
 \\
 University of California, Riverside \\
  \texttt{sasif@ucr.edu} \\
}

\newcommand{\deltask}[1]{\mathcal{S}_{\Delta_k}\!\big(#1\big)}

\newcommand{\bms}{\bm{s}}
\newcommand{\bmx}{\bm{x}}
\newcommand{\bmy}{\bm{y}}
\newcommand{\bmp}{\bm{p}}
\newcommand{\bmphi}{\bm{\phi}}
\usepackage[capitalize]{cleveref}

\crefname{section}{Sec.}{Secs.}
\Crefname{section}{Section}{Sections}
\crefname{table}{Tab.}{Tabs.}
\Crefname{table}{Table}{Tables}

\usepackage{hyperref}

\begin{document}

\maketitle
\begin{abstract}

Wavefront sensing involves estimating the phase and intensity of light, enabling a wide range of imaging applications, from adaptive optics and astronomy to biomedical imaging. Since conventional image sensors can only measure the spatial intensity distribution, phase retrieval arises as the central problem in wavefront sensing. Conventional interferometric approaches like phase-shifting interferometry (PSI) can recover phase information, but they rely on a stable reference beam that is difficult to realize in practical settings. To overcome this limitation, we propose a novel self-reference framework that relies on interference between shifted copies of the incoming wave; this results in pairwise phase differences between shifted pixels.
We formulate an analytical solution for the complete phase retrieval based on the propagation of these differences across a connected graph. Furthermore, we provide a theoretical analysis of optimal measurement patterns, proving that co-prime shifts guarantee a connected graph and bound worst-case error accumulation, yielding a provably robust method. Extensive simulations demonstrate that complete phase profiles can be recovered from as few as eight shifted measurements, outperforming several existing approaches. Finally, we validate our framework using a hardware prototype, demonstrating real experiments for optical phase profile recovery, auto-refocusing, and imaging through scattering media.

\end{abstract}

\begin{figure}
    \centering
    \includegraphics[width=\linewidth]{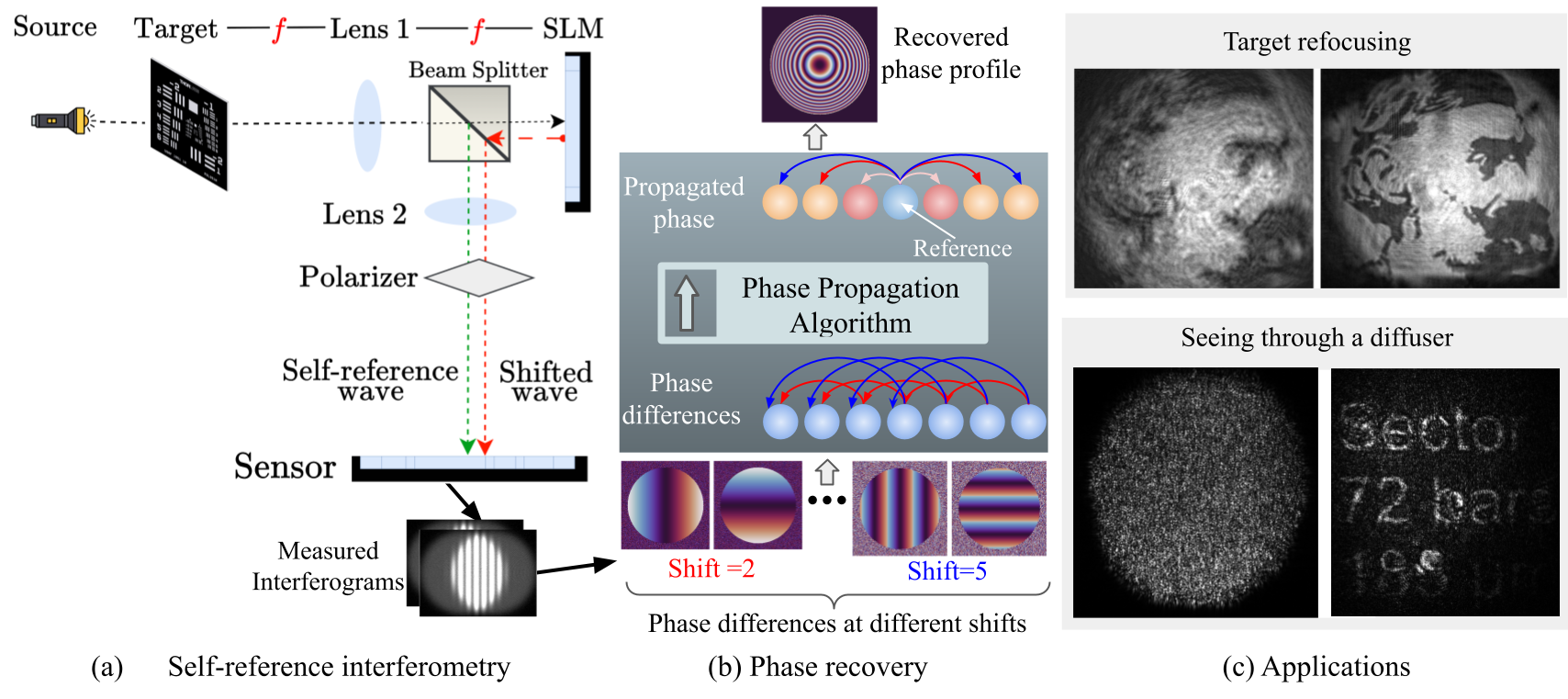}
    \caption{\textbf{Overview of the proposed wavefront sensing via self-reference interferometry.}
(a) Interference of the incoming wavefront with its spatially shifted copy (i.e.\ self reference) provides pairwise phase differences among shifted pixels. (b) Phase differences measured at multiple shifts are propagated using our proposed graph-based algorithm to recover the global phase profile. (c) The recovered phase enables applications such as target refocusing and imaging through a diffuser.}
    \label{fig:intro_figure}
\end{figure}

\section{Introduction}
\label{sec:intro}

Light is an electromagnetic wave. While image sensors measure the spatial distribution of the intensity of this wave, a full description of the wave also involves its phase that is not directly measurable. Phase and intensity provide a complete description of the wave and can enable capabilities ranging from computational refocusing and correction of optical aberrations, as enabled by wavefront sensing approaches, to interferometric systems for imaging microscopic details in scattering medium, enabling holographic displays, and even the detection of gravitational waves \cite{born2013principles, goodman2005introduction, schnars2002digital,  vellekoop2007focusing, park2018quantitative, hampson2021adaptive,booth2014adaptive,huang2024quantitative, abbott2016observation}.

Phase retrieval from intensity-only measurements is a challenging and highly non-convex problem.
The general phase retrieval problem can be defined as recovering a complex-valued signal $\bm{x}$ from its intensity measurements:
\begin{equation}\label{eq:y=|Ax+r|}
    \bm{y} = |\mathcal{A}(\bm{x}) + \bm{r}|^2,
\end{equation}
where $\mathcal{A}(\cdot)$ represents the imaging operator and $\bm{r}$ denotes a known or unknown reference. Structured Fourier transforms are the most common imaging operators in real-world optical systems \cite{goodman2005introduction} but geometric transforms and modulations are also feasible \cite{candes2015phaseCDP,zhang2007phase}. In general, solving the phase retrieval problem requires solving a nonlinear (iterative) optimization problem. Several classical \cite{gerchberg1972practical,fienup1982phase}, model-based \cite{candes2013phaselift,candes2015phase}, and deep learning-based \cite{metzler2018prdeep,icsil2025deep,gandelsman2019double} methods have been proposed for phase retrieval. \textbf{In this work, we seek to develop an analytical, non-iterative, and robust solution for phase retrieval. }

The choice of imaging operator or reference in \eqref{eq:y=|Ax+r|} can simplify the phase retrieval problem. For instance, phase-shifting interferometry (PSI)~\cite{hariharan1987digital, de2011phase, lai1991generalized} records interference of the unknown wave $\bm{x}$ with a known (phase-shifted) reference wave $\bm{r}$. The resulting phase retrieval solution can be written in an analytical form~\cite{chen2021reference, oti2003analysis, andersen2009holographic, medecki1996phase}.
PSI systems require a stable reference wave and protection against environmental perturbations, which makes the PSI setups bulky and difficult to realize in many real-world scenarios~\cite{de1995vibration, wyant2003dynamic, malacara2007optical}. Self-reference PSI~\cite{rhoadarmer2004development} and common-path interferometry~\cite{vakhtin2003common} mitigate these limitations by avoiding external reference waves and reducing environmental sensitivity. Our measurement design closely follows these strategies.

In this work, we present a simple, analytical, and provably robust method for self-reference PSI. \Cref{fig:intro_figure} illustrates our
approach, where spatially shifted copies of the incoming wave interfere with each other, from which we extract pairwise phase differences that are integrated via a graph-based algorithm to recover the complete phase profile.
We summarize our contributions as follows.
\begin{itemize}
    \item \textbf{Self-referenced sensing model.} We introduce a self-referenced interferometric framework that uses spatially shifted copies of the incoming wave to extract pairwise phase differences between shifted pixels. The sensing model is physically realistic and enables a simple phase retrieval solution.
    \item \textbf{Robust and analytical phase retrieval algorithm.} We propose a simple, analytical, and robust solution that propagates pairwise phase differences across the measurement graph to recover the full phase profile. The method can be efficiently implemented using parallel propagation of multiple paths. A final refinement step using least-squares estimation offers improved recovery under noise, recovering complete phase profiles from as few as eight shifted measurements.
    \item \textbf{Optimal measurement design with provable guarantees.} The analytical solution for phase retrieval relies on the propagation of pairwise phase differences across a connected graph. We prove that co-prime shifts in the measurements guarantee graph connectivity and bound worst-case error accumulation, yielding provably optimal shift pairs that minimize error propagation.
    \item \textbf{Experimental validation and hardware prototype.} We validate the proposed framework through extensive simulations and a hardware prototype, demonstrating optical phase profile recovery, auto-refocusing, and imaging through
    scattering media.
\end{itemize}

\section{Related Work}

\subsection{Interferometry-based wavefront sensing}

\textbf{Phase-Shifting interferometry.} Interferometry-based wavefront sensing methods recover the phase of an optical field by superimposing it with a reference beam or with a modified copy of itself and measuring the resulting intensity patterns \cite{creath1988v, bruning1974digital, hariharan2010basics}. It has been used in several adaptive optics~\cite{tyson2022principles} applications, including astronomy \cite{beckers1993adaptive, van2006wm, davies2012adaptive, roddier2004adaptive}, metrology~\cite{yoshizawa2009handbook, zuo2022deep}, and biomedical imaging \cite{booth2007adaptive, ji2017adaptive, park2018quantitative}. A widely used interferometric technique for wavefront sensing is \textit{Phase-Shifting Interferometry (PSI)}~\cite{hariharan1987digital, de2011phase}, where known phase shifts are introduced between a coherent reference beam and the unknown wavefront, and the resulting interference patterns are used to recover the phase. A key limitation of PSI is its reliance on a stable, coherent reference beam, which is not always available or practical~\cite{wyant2003dynamic, malacara2007optical}.

\noindent
\textbf{Common-path interferometry.} To mitigate the sensitivity of PSI-based methods, \textit{common-path interferometry}~\cite{vakhtin2003common, gluckstad2001optimal} ensures that the reference and unknown wavefronts share nearly identical optical paths, causing environmental perturbations to cancel as common-mode noise. The \textit{point diffraction interferometer}~\cite{smartt1975theory} is an example of common-path interferometry in which a small pinhole spatially filters a portion of the unknown beam to generate a reference wave. Extensions of this idea include the \textit{phase-shifting point diffraction interferometer}~\cite{medecki1996phase} and more recent reference-wave design approaches such as ReWave~\cite{chen2021reference}.

\noindent
\textbf{Lateral shearing interferometry.}
Another important class of common-path interferometric techniques is \textit{lateral shearing interferometry} (LSI) \cite{servin2007lateral, velghe2006advanced, robledo2013phase}. In LSI, the wavefront interferes with a laterally shifted copy of itself, which produces interference fringes that encode spatial phase differences between points.  Our measurement model is closely related in that it also relies on controlled lateral shifts and self-interference to encode relative phase differences. 

\subsection{Non-interferometric wavefront sensing}
\textbf{Hardware-based sensors.}
Non-interferometric wavefront sensing methods recover phase from intensity-only measurements via hardware sampling or computational reconstruction. Representative hardware-based approaches include the Shack–Hartmann sensor~\cite{platt2001history} and the pyramid wavefront sensor~\cite{ragazzoni1996pupil,verinaud2005adaptive}. Such methods are limited by spatial resolution and noise sensitivity~\cite{tyson2022principles}.

\noindent
\textbf{Computational phase retrieval.}
Computational approaches such as phase retrieval methods reconstruct the wavefront from intensity measurements at one or more planes. Phase retrieval is challenging because it is nonlinear and nonconvex. A wide range of algorithms have been proposed to solve it~\cite{fienup1982phase, candes2013phaselift, netrapalli2013phase}. Classical iterative methods such as Gerchberg–Saxton~\cite{gerchberg1972practical} alternate between spatial and Fourier domains to recover phase consistent with measured intensities. More recent approaches include coded diffraction pattern (CDP) formulations~\cite{candes2015phaseCDP}, convex relaxations such as PhaseLift~\cite{candes2013phaselift}, and gradient-based nonconvex methods such as Wirtinger Flow~\cite{candes2015phase} and alternating minimization schemes~\cite{netrapalli2013phase}. 

\noindent
\textbf{Programmable optical modulation.}
Programmable optical modulation has been used to improve non-interferometric wavefront sensing~\cite{katkovnik2017computational, zhang2007phase}. Several works use spatial light modulators (SLMs) to encode the incoming field with sequential phase 
patterns and recover the complex wavefronts from intensity-only 
measurements~\cite{kohler2009characterization,falldorf2010phase}. WISH~\cite{wu2019wish} uses random phase modulation and alternating-projection phase retrieval~\cite{fienup1982phase} to reconstruct the field.

\noindent
\textbf{Deep learning-based phase retrieval.}
Deep learning has been applied to phase retrieval in several ways. Supervised networks directly predict phase from intensity measurements~\cite{rivenson2018phase, wang2024use}, while model-based approaches use learned denoisers within iterative schemes, including prDeep~\cite{metzler2018prdeep} with RED~\cite{romano2017little} and plug-and-play formulations~\cite{kamilov2017plug,icsil2025deep}. Untrained priors, such as Deep Image Prior (DIP) and Double-DIP, exploit convolutional architectures without external training~\cite{ulyanov2018deep,gandelsman2019double}. Generative models and diffusion priors further regularize phase retrieval by enforcing consistency with intensity measurements~\cite{shoushtari2023diffusion,kaya2025ddrm,kaya2025i2i,chung2022diffusion}, typically applying the prior to real-valued images.

\section{Self Interference and Phase Propagation}
\subsection{Self interference and phase recovery}

Suppose we represent the unknown complex-valued wave at some given plane as a 2D array $\bmx = |\bmx|e^{j\bmphi} \equiv |\bmx|\odot \bmp$, where $\bmphi$ denotes the phase, $\bmp=e^{j\bmphi}$ denotes the phase component or unit phasor, and $\odot$ represents element-wise multiplication.

\noindent \textbf{Self-reference interferometry} treats parts of the incident wave $\bmx$ as a reference (e.g., point diffraction interferometry \cite{smartt1975theory,medecki1996phase,akondi2014digital}). 
We can use a single bright point in $\bmx$ as a reference, e.g., $\bmx(0)$, and capture measurements with multiple phase shifts, $\varphi_q \in \{0,\pi/2, \pi, 3\pi/2\}$, as 
\begin{equation}\label{eq:PS-PDI}
\bmy_q = | \bmx + \bmx(0) e^{j\varphi_q} |^2.
\end{equation}
Since we can only recover the phase of $\bmx$ up to a global ambiguity, we select the phase of $\bmx(0)$ as a reference and set to zero (\ie $\bmx(0) = |\bmx(0)|$, which is recovered directly from $\bmy_q(0)$). Subsequently, we can recover the complete wave $\bmx$ as 
\begin{equation}\label{eq:intrf-reconst}
    \bmy_q = |\bmx|^2 + |\bmx(0)|^2 + 2\text{Re}(\bmx |\bmx(0)| e^{-j\varphi_q}) \quad \Rightarrow  \quad \bmx = \tfrac{1}{4 |\bmx(0)|}\sum_q \bmy_q e^{j\varphi_q}.
\end{equation} 
A single point-based interference provides a closed-form solution for phase retrieval, but also suffers from low signal-to-noise ratio. In this paper, we seek to find similar simple solutions for different self-interferometric measurements. \medskip

\noindent \textbf{Geometric transforms for self reference.} Suppose we can interfere $\bmx$ with its shifted copy along horizontal, vertical, and any other direction. Furthermore, we can capture multiple measurements with different amounts of spatial shifts. We can represent such shifted interferometric measurements as 
\begin{align}\label{eq:PSI_main}
\bmy_{k,q} &= |\bmx + e^{j\varphi_q}\mathcal{S}_{\Delta_k}(\bmx)|^2 + \bm{\eta}_{k,q} \\
&\equiv |\bmx|^2 + |\mathcal{S}_{\Delta_k}(\bmx)|^2 + 2\text{Re}\left(|\bmx|\odot |\mathcal{S}_{\Delta_k}(\bmx^*)| e^{j(\bmphi-\mathcal{S}_{\Delta_k}(\bmphi)-\varphi_q)}\right) + \bm{\eta}_{k,q}, \notag
\end{align}
where $\Delta_k$ encodes the 2D shift vector for $k=1,\ldots,K$, $\mathcal{S}_{\Delta_k}(\cdot)$ represents an operator that shifts input by $\Delta_k$ pixels, $\varphi_q = \frac{\pi}{2} q$ represents the quadrature phases with $q \in \{0, 1, 2, 3\}$, and 
$\bm{\eta}_{k,q}$ represents the measurement noise. While the system in \eqref{eq:PSI_main} does not provide a closed-form solution like the one in \eqref{eq:intrf-reconst}, we can recover the wave using a simple (graph propagation-based) solution. \medskip 

\noindent \textbf{Phase difference propagation.} Our key observation is that we can compute the pairwise phase differences from the measurements in \eqref{eq:PSI_main}, which can then be combined to recover the complete phase profile. In particular, we can estimate the following phasors from the quadrature measurements in \eqref{eq:PSI_main}:
\begin{equation}\label{eq:phase-diff}
    \sum_q \bmy_{k,q}e^{j\varphi_q} \approx 4 |\bmx|\odot |\mathcal{S}_{\Delta_k}(\bmx^*)| e^{j(\bmphi-\mathcal{S}_{\Delta_k}(\bmphi))} ~ \Rightarrow ~ \bmp_k = \tfrac{\sum_q \bmy_{k,q}e^{j\varphi_q}}{|\sum_q \bmy_{k,q}e^{j\varphi_q}|}\approx e^{j(\bmphi-\mathcal{S}_{\Delta_k}(\bmphi))}. 
\end{equation}
In other words, we can estimate unit phasors $\bmp_k$ that encode phase differences between all pixels shifted by $\Delta_k$.  
In the presence of noise, the estimated $\bmp_k$ in \eqref{eq:phase-diff} will be a noisy unit phasor of pairwise phase differences. 
We can assume one pixel as a reference with zero phase (e.g., $\bmphi(0) = 0, \bmp(0)=1$), and compute the relative phase at any other pixel that is connected to the reference pixel by propagating the phasors in \eqref{eq:phase-diff} to the reference. 
\textbf{For robust and complete phase recovery, we  need measurements that yield a graph with every pixel connected to the reference pixel with minimal path length.}

\subsection{Single shift and single-path propagation}
\label{sec:single_shift_prop}
Let us first consider measurements with a single-pixel shift, as illustrated in \cref{fig:single_vs_two_shifts}(a) for a single row. Suppose we record $\bmy_{1,q}$ with $\Delta_1 = (1,0)$ as a single-pixel horizontal shift. We can calculate the phasors for the phase differences between adjacent pixels using \eqref{eq:phase-diff} as $\bmp_1$. We can then start with the reference pixel phase $(\bmp(0)=1)$ and estimate phasors for all the connected pixels as $\widehat{\bmp}(1) = \bmp_1(0)$, $\widehat{\bmp}(2) = \widehat{\bmp}(1) \bmp_1(1)$, and $\widehat{\bmp}(i+1) = \widehat{\bmp}(i)\bmp_1(i)$ for any $i$. 
In other words, we can propagate the phasors to all the connected pixels as $\widehat{\bmp}(i+\Delta_1) = \widehat{\bmp}(i) \bmp_1(i)$.

A single pixel shift in the horizontal and vertical directions would provide a connected graph for a 2D array. In principle, we can propagate the phasors for phase differences and estimate the complete phase profile. In practice, the phase differences will be estimated from noisy measurements, which will lead to noisy estimates. The single-pixel shift-based measurements under noisy conditions are prone to error accumulation (as shown in~\cref{fig:mae_vs_hop_length}). Let us assume that each phase difference estimation incurs an independent error $\epsilon$ with zero mean and variance $\sigma^2$. In the case of single-pixel shifts, the error accumulation at any pixel will be proportional to its distance from the reference pixel.

\subsection{Two shifts and multi-path propagation}
\begin{figure}[t]
    \centering
    \includegraphics[width=0.95\linewidth]{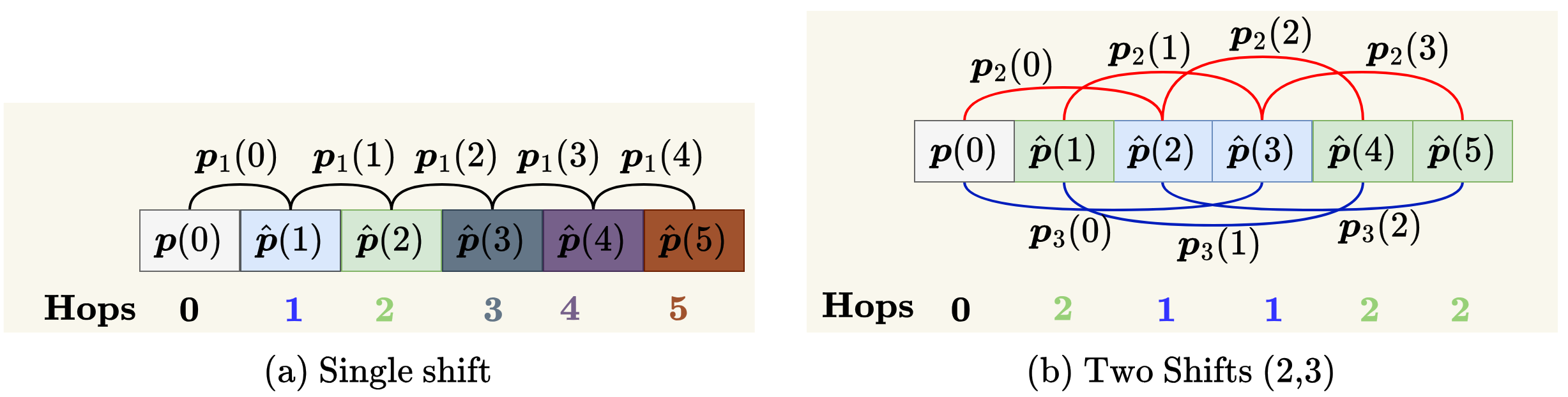}
    \caption{Comparison of phasor propagations by single- and two-shift measurements for $N=9$. Note that $\bmp(i) = e^{j\bmphi(i)}$ denotes the unit phasor of original phase at node $i$, whereas $\bmp_k(i)$ denotes the unit phasor that encodes phase difference between node $i$ and $i+\Delta_k$. 
(a) A single shift ($\Delta_1=1$) forms a path graph with linearly increasing hop distance from the reference. 
(b) Two shifts ($\Delta_2=2$, $\Delta_3=3$) introduce long-range connections that reduce the maximum hop count and shorten propagation paths.}
    \label{fig:single_vs_two_shifts}
\end{figure}

Multiple shifts can provide multiple propagation paths and reduce effective path lengths and error accumulation in the phasor propagation.
Let us consider two distinct shifts $\Delta_s,\Delta_t$ and the  corresponding (estimated) phase-difference phasors $\bmp_s,\bmp_t$. 
For instance, in \cref{fig:single_vs_two_shifts}(b), we choose  $\Delta_s=2,\Delta_t=3$ and compute $\bmp_2,\bmp_3$. 
Assuming $\bmp(0) =1$, we can estimate the phasors at node 2,3 that are directly connected to 0 as $\widehat\bmp(2) = \bmp_2(0), \widehat\bmp(3) = \bmp_3(0)$. We can then propagate these phasors to estimate $\widehat \bmp(4) = \widehat{\bmp}(2) \bmp_2(2),\widehat \bmp(5) = \widehat\bmp(2) \bmp_3(2), \widehat \bmp(1) = \widehat\bmp(3) \bmp^*_2(0).$ Note that to estimate phasor at node 1, we use the estimate at node 3 and then the phasor of difference from 3 to 1. 
The maximum path length (or hop distance) reduces to $2$, which is a significant improvement over the maximum path length of 4 in the single-shift case.

We can reduce the path length of phasor propagation by carefully selecting the shifts $\Delta_s$ and $\Delta_t$ as multiple shifts introduce new connections between the reference pixel and the rest of the graph. These connections allow information to propagate in larger steps and fewer hops compared to the single-shift case. 
The next logical question is how to choose $\Delta_s$ and $\Delta_t$ to ensure complete and robust propagation. We discuss this in~\cref{sec:theory}, where we prove that co-prime shifts guarantee full connectivity and minimum path length.

\begin{algorithm}[t]
\caption{Phase Retrieval for Self-Reference Interferometry}
\label{alg:bfs_phase}
\begin{algorithmic}[1]
\Require Phase-difference phasors $\{\bmp_k\}_{k=1}^K$, shifts $\{\Delta_k\}_{k=1}^K$, reference node $i_0$
\State Initialize phasor estimate $\widehat\bmp$ with reference node $\widehat\bmp(i_0) = 1$
\State Initialize queue $\mathcal{Q} \gets \{i_0\}$
\State Initialize hop map $h(i) \gets \infty$; set $h(i_0) \gets 0$ \hfill {\color{magenta} \it $\triangleright$Track hop length from the reference}
\While{$\mathcal{Q} \neq \emptyset$}
    \State Pop $i \gets \text{dequeue}(\mathcal{Q})$ 
    \For{each neighbor $i'$ reachable via any shift in $\{\Delta_k\}$}
        \State $\bmp' \gets \widehat \bmp(i) \cdot \bmp_k(i)$ \hfill {\color{magenta} \it $\triangleright$ Apply phasor difference}
        \If{$h(i') = \infty$} \hfill {\color{magenta} \it $\triangleright$ Node has not been visited}
            \State Set phase: $\widehat\bmp(i') \gets \bmp'$ 
            \State Set $h(i') \gets h(i)+1$ \hfill{\color{magenta} \it $\triangleright$ Increment hop length}
            \State Enqueue $(i')$ \hfill {\color{magenta} \it $\triangleright$ Push node to queue}
        \ElsIf{$h(i') = h(i) + 1$} \hfill {\color{magenta} \it $\triangleright$ Node was visited with the same hop length}
            \State $\widehat\bmp(i') \gets \text{avg}(\widehat\bmp(i'), \bmp')$ \hfill {\color{magenta} \it $\triangleright$ Average equal hop phasors}
        \EndIf \hfill {\color{magenta} \it $\triangleright$ Discard any phasors with longer hop length}
    \EndFor
\EndWhile
\State \Return phase $\widehat{\bmphi} = \angle \widehat \bmp$ \hfill {\color{magenta} \it $\triangleright$ Compute phases from phasors}
\end{algorithmic}
\end{algorithm}

\subsection{Multiple shifts and multi-path propagation}

The general phase recovery process with $K$ shifts $\{\Delta_k\}_{k=1}^K$ can be formulated as the propagation of phase-difference phasors $\{\bmp_k\}_{k=1}^K$ over a graph whose vertices correspond to pixels and whose edges correspond to the estimated phase-difference phasors, as shown in \cref{fig:single_vs_two_shifts} for 1 or 2 shifts. For complete connectivity, the shifts must be co-prime (see \cref{sec:connectivity}). 

A general algorithm for our method is presented in \cref{alg:bfs_phase}. The method takes $\{\Delta_k\}$, $\{\bmp_k\}$, and a reference pixel as input. The algorithm maintains a queue of nodes $\mathcal{Q}$, which operates in a breadth-first search manner and provides starting nodes for phase propagation at each iteration. The queue is initialized with the reference node. In the first iteration, the phasors of the neighbors reachable from the reference through each shift are computed as $\widehat{\bmp}(\Delta_k) = \bmp(0)\, \bmp_k(\Delta_k)$ and added to the queue. The algorithm then iteratively dequeues elements and propagates phasors from each node to its neighbors. To ensure consistency, a node is only updated if it has not been visited before. Additionally, to improve robustness, propagated phasors from multiple paths are averaged when the paths have the same distance from the reference node.

\noindent
\textbf{Efficient parallel implementation.}
The proposed method shown in~\cref{alg:bfs_phase} processes one pixel at a time, which is inherently sequential. For parallel GPU implementation, we reformulate the propagation as an iterative wavefront expansion, where at every iteration, all the visited pixels simultaneously propagate their phase estimates to all valid neighbors via vectorized tensor operations. We present details of the implementation in the supplementary material.

\noindent
\textbf{Noise reduction via hop-constrained path averaging.} We can reduce the variance in the phase estimate by averaging multiple non-redundant propagation paths that reach the same node/pixel. 
Since the estimation error accumulates with path length (see \cref{sec:single_shift_prop}), we restrict averaging to paths with equal hop lengths and discard paths with larger hop lengths (as they potentially represent larger accumulated noise). This hop-constrained averaging reduces variance without introducing additional bias. This strategy allows us to introduce an additional design axis for propagation. 

\noindent
\textbf{Global least-squares refinement.}  
We can further refine $\widehat{\bmphi}$, estimated using multi-shift propagation~(\cref{alg:bfs_phase}), by jointly fitting it to all the measured phase differences via a global least-squares step. Let $\nabla_k \bmphi = \angle \bmp_k$ denote the wrapped phase differences estimated in \eqref{eq:PSI_main}. We first estimate the integer offsets as 
$m_{k}
= \text{round}\left(
(\deltask{\widehat{\bmphi}}
-
\nabla_k \bmphi
)/2\pi
\right),$
and unwrap the phase difference measurements as
$\nabla_k \bmphi^{\text{unwrap}} = \nabla_k \bmphi+2\pi m_k.$
The refined phase can then be obtained by solving
$
\min_{\bmphi}
\sum_{k=1}^K
\left\|\deltask{\bmphi}
-
\nabla_k \bmphi^{\text{unwrap}}
\right\|_2^2.
$
Since $\deltask{\cdot}$ are shift operators, it can be viewed as a convolution of $\bmphi$ with a filter $\bms_{\Delta_k}$, which can be diagonalized in the Fourier domain. The least-squares solution thus has a simple closed-form expression in the Fourier domain:
\begin{equation}
    \widehat{\bmphi}_{LS} =
    \mathcal{F}^*
    \left(
    \frac{
    \sum_k (\overline{{\mathcal{F}(\bms_{\Delta_k})}}\odot\mathcal{F}(\nabla_k\bmphi^\text{unwrap})
    )}{
    \sum_k |\mathcal{F}(\bms_{\Delta_k})|^2 + \lambda
    }
    \right),
\end{equation}
where $\mathcal{F} $ is the Fourier transform operator, division and multiplications are element-wise, and $\lambda$ is a regularizer.

\section{Provable and Optimal Sensing with Co-Prime Shifts}\label{sec:theory}

A complete and stable phase recovery depends on the graph connectivity and maximum path length with respect to the reference node. 
The choice of shifts $\{\Delta_k\}$ determines the graph connections and these properties. 
We seek to select $\{\Delta_k\}$ that guarantee \textbf{full connectivity} such that every pixel is connected to the reference node through some path and limit the \textbf{maximum path length} that in turn limits the worst-case hop distance and the associated error accumulation. 
In this section, we show that co-prime shifts guarantee full connectivity and minimum path length.

We model the phase recovery problem as propagation over a graph of pixels, as illustrated in~\cref{fig:single_vs_two_shifts}. For simplicity, let us assume the graph consists of $N$ pixels/nodes that are indexed as 
\begin{equation}
V = \left\{-\left\lfloor\tfrac{N}{2}\right\rfloor, \dots, \left\lfloor\tfrac{N}{2}\right\rfloor - 1\right\}.
\label{eq:graph_nodes}
\end{equation} 
An \emph{edge} exists between node $i$ and $j$ if we have direct phase difference measurements between them as $\nabla_k \bmphi$. For a set of shifts $\mathcal{S} = \{s_1, s_2, \dots, s_K\}$, the edge set is
\begin{equation}
    \label{eq:graph_edges}
    E(\mathcal{S}) = \bigl\{ (i, i \pm s_k) : i \pm s_k \in V,\; s_k \in \mathcal{S} \bigr\}.
\end{equation}
Phase propagation can then be viewed as a traversal along this graph. Starting from a reference node, which is assumed to have known or zero phase, we integrate along edges to estimate phases at all other nodes. The \emph{hop distance} between two nodes is defined as the minimum number of edges that must be traversed to connect them. The \emph{diameter} of the graph, which is the maximum hop distance from the reference node to any other node, directly determines the worst-case accumulation of measurement noise.

\subsection{Connectivity guarantees}
\label{sec:connectivity}

We now formalize the conditions under which a graph defined by two shifts of size $s$ and $t$ is guaranteed to be fully connected. In~\cref{theorem:connectivity}, we establish that if the shifts $s$ and $t$ are co-prime (\ie $\gcd(s,t)=1$) and satisfy $s + t \le N$, then every node in $V$ is reachable from the reference node $0$ via a sequence of hops along edges. This result relies on two key observations:~\cref{lemma:co-prime_cover_all_residues} shows that by taking repeated steps of size $s$, we can generate all residues of $\mod (s+t)$, and~\cref{lemma:sliding_window} shows that there is a sliding window of locally connected nodes that connects every node to the reference. The complete proofs of these lemmas and~\cref{theorem:connectivity} are provided in the supplementary material.

\begin{lemma}[Co-prime residue coverage]
\label{lemma:co-prime_cover_all_residues}
Let $k\ge 1$, $s\in\mathbb{Z}$ and $\gcd(s,k)=1$. Then for any $p\in\mathbb{Z}$,
\[
\{ (p + is)\bmod k \;:\; i=0,1,\dots,k-1 \,\}=\{0,1,\dots,k-1\}.
\]
\end{lemma}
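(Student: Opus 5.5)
The plan is a pigeonhole argument. The map $i \mapsto (p+is)\bmod k$ sends the $k$ indices $i=0,1,\dots,k-1$ into the $k$-element set $\{0,1,\dots,k-1\}$. It therefore suffices to show that this map is injective; surjectivity onto all residues, which is the claimed set equality, then follows from the finite-cardinality count.

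For injectivity, suppose $0\le i<i'\le k-1$ and $(p+is)\bmod k=(p+i's)\bmod k$. Subtracting, we get $k \mid (i'-i)s$. This is the only step that uses the hypothesis $\gcd(s,k)=1$. I would invoke Euclid's lemma (equivalently, Bézout: choose $a,b$ with $as+bk=1$ and multiply by $i'-i$) to conclude $k\mid (i'-i)$. But $0<i'-i\le k-1<k$, a contradiction. Hence the $k$ residues are pairwise distinct and exhaust $\{0,\dots,k-1\}$.

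Two minor points need attention, neither of which is a real obstacle. First, $s$ and $p$ may be negative or zero, so I would state that $\bmod$ denotes the least nonnegative residue. The congruence argument is insensitive to sign, since $\gcd(s,k)=\gcd(|s|,k)$. Second, the degenerate case $k=1$ is trivial: both sides equal $\{0\}$, and $\gcd(s,1)=1$ holds automatically. The single substantive step is the cancellation of $s$ modulo $k$, and this follows directly from coprimality.
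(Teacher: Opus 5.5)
Your proof is correct and follows the same route as the paper's. Both show the $k$ residues are pairwise distinct by reducing a coincidence to $k \mid (i'-i)s$, cancelling $s$ via $\gcd(s,k)=1$, and contradicting $0<i'-i<k$, then conclude by counting; your explicit treatment of negative $s,p$ (least nonnegative residue) and of $k=1$ is a minor tidying the paper omits.
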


\begin{lemma}[Sliding window coverage]
\label{lemma:sliding_window}
Let $s, t \in \mathbb{N}$ with $s + t \le N$ and $V$ as defined in~\cref{eq:graph_nodes}. For any $p \in V$, there exists an interval $I_1$ of size $s+t$ containing $p$, and a sequence of intervals $I_1, I_2, \dots, I_K$ of size $s+t$, each contained in $V$, such that consecutive intervals share at least one common element, and $I_K$ contains $0$.
\end{lemma}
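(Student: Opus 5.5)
The proof is an explicit construction of contiguous windows of length $L:=s+t$ inside $V$. Write $V=\{a,\dots,b\}$ with $a=-\lfloor N/2\rfloor$ and $b=\lfloor N/2\rfloor-1$. Then $|V|=b-a+1$, which equals $N$ when $N$ is even and $N-1$ when $N$ is odd. We also have $a\le 0\le b$ whenever $N\ge 2$, and since $s,t\ge 1$ the hypothesis gives $N\ge L\ge 2$. For each integer $m$ let $W_m=\{m,m+1,\dots,m+L-1\}$. The window $W_m$ lies in $V$ exactly when $a\le m\le b-L+1$, so we need $b-L+1\ge a$, i.e.\ $|V|\ge L$.

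This is the first point to check. If $N$ is even, $|V|=N\ge L$ holds directly. If $N$ is odd, $|V|=N-1$, so we need $L\le N-1$. This follows from $L\le N$ unless $L=N$, and that boundary case is the main obstacle. I would first check whether the supplementary convention for $V$ or the assumption $s+t\le N$ is really meant as $s+t\le |V|$. If not, the case $N$ odd with $s+t=N$ has to be handled separately. There I would allow $I_K$ to be a window ending at $b$ and argue directly that it still meets $0$. Otherwise I would note that the lemma needs the hypothesis $s+t\le|V|$, which is what the connectivity theorem actually uses. In the main argument I assume $|V|\ge L$.

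Given $p\in V$, choose the starting window by clamping. Set $m_1=\min\bigl(\max(p-L+1,a),\,b-L+1\bigr)$, so that $a\le m_1\le b-L+1$ and $W_{m_1}\subseteq V$. To see that $p\in W_{m_1}$, i.e.\ $m_1\le p\le m_1+L-1$, split into three cases. If no clamping occurs, $m_1=p-L+1$ and $p$ is the right endpoint. If clamping at $a$ occurs, $m_1=a\le p$ and $p<a+L$. If clamping at $b-L+1$ occurs, $p\le b=m_1+L-1$ and $m_1\le p$. Do the same for $0$ to get a window $W_{m_0}\subseteq V$ containing $0$.

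To connect the two windows, move the start index one step at a time from $m_1$ to $m_0$, increasing or decreasing as needed. Every intermediate start index lies between $m_1$ and $m_0$, hence in $[a,b-L+1]$, so every window stays inside $V$. Consecutive windows $W_m$ and $W_{m\pm1}$ share $L-1\ge 1$ elements. Listing these windows as $I_1=W_{m_1},\dots,I_K=W_{m_0}$ gives the required sequence: each window has size $s+t$ and lies in $V$, $I_1$ contains $p$, consecutive windows overlap, and $I_K$ contains $0$. If $m_1=m_0$ then $K=1$. Apart from the parity boundary case, everything here is routine interval bookkeeping.
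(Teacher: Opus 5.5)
Your proposal is correct and takes the same approach as the paper. Both choose a window of length $s+t$ containing $p$, clamped so that it lies in $V$, then slide its start one unit at a time toward a window containing $0$, with consecutive windows sharing $s+t-1\ge 1$ points. Your version is more careful than the paper's, which claims $|V|=N$ (false for odd $N$, where $|V|=N-1$) and uses the clamp $i_0=\min(x,\lfloor N/2\rfloor-s-t+1)$, which can produce a window ending at $\lfloor N/2\rfloor\notin V$. You are also right that for odd $N$ with $s+t=N$ the statement itself fails (e.g.\ $N=3$, $s=1$, $t=2$, $V=\{-1,0\}$), so your fallback of a window ending at $b$ cannot work, since it would start at $a-1\notin V$; the correct fix is your other option, the hypothesis $s+t\le|V|=2\lfloor N/2\rfloor$.
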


\begin{theorem}[Connectivity via co-prime shifts]
\label{theorem:connectivity}
Let $s,t \in \mathbb{N}$ be co-prime (\ie $\gcd(s,t)=1$), the graph $V$ as defined in~\cref{eq:graph_nodes}, and edges $E\left(\{s,t\}\right)$ as defined in~\cref{eq:graph_edges} using shifts $s$ and $t$. Then every node in $V$ is reachable from the reference node through a sequence of hops along edges, provided $s + t \le N.$
\end{theorem}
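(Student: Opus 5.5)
The plan is to show first that every window of $s+t$ consecutive nodes inside $V$ is internally connected, and then to chain such windows from any node $p$ back to $0$ using \cref{lemma:sliding_window}. Connectivity is an equivalence relation, so it is enough to connect every $p\in V$ to $0$.

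\textbf{Step 1 (local connectivity of a window).} Fix an interval $I=\{a,a+1,\dots,a+s+t-1\}\subseteq V$ of size $s+t$, and assume without loss of generality $s<t$ (the case $s=t$ forces $s=t=1$, which is trivial). Define a map $f$ on $I$ by $f(x)=x+s$ if $x+s\le a+s+t-1$, and $f(x)=x-t$ otherwise. In both cases $f(x)\in I$, the pair $(x,f(x))$ is an edge of $E(\{s,t\})$, and $f(x)-a\equiv (x-a)+s \pmod{s+t}$. Thus, after relabeling $I$ by residues mod $s+t$, the map $f$ is just translation by $s$.

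Since $\gcd(s,t)=1$, we also have $\gcd(s,s+t)=1$. Applying \cref{lemma:co-prime_cover_all_residues} with $k=s+t$ and $p=x-a$, the orbit $x,f(x),\dots,f^{s+t-1}(x)$ visits every residue, hence every element of $I$. Consecutive elements of this orbit are joined by edges, so $I$ lies in a single connected component. This step is the crux: one must check that the wrap-around move $x\mapsto x-t$ stays inside $I$ and inside $V$, which follows from $x\ge a+t$ in that branch and from $I\subseteq V$.

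\textbf{Step 2 (chaining windows).} For an arbitrary $p\in V$, \cref{lemma:sliding_window} (which uses $s+t\le N$) supplies intervals $I_1\ni p,\dots,I_K\ni 0$ of size $s+t$, each contained in $V$, with consecutive intervals overlapping. By Step 1 each $I_j$ is connected. A shared element links $I_j$ and $I_{j+1}$, so by induction $I_1\cup\dots\cup I_K$ is connected, and therefore $p$ is reachable from $0$.

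I expect the only real obstacle to be the bookkeeping in Step 1, namely that every hop stays within both $I$ and $V$. The window-based construction handles this by design, since any edge between two nodes of $I\subseteq V$ is automatically in $E(\{s,t\})$.
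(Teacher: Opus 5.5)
Your proof is correct and follows the paper's route: residue coverage modulo $s+t$ via a $+s$/$-t$ walk (\cref{lemma:co-prime_cover_all_residues}), then chaining overlapping windows to $0$ (\cref{lemma:sliding_window}). Anchoring the wrap-around at each window's own right endpoint is tighter than the paper's single walk from $0$ anchored at the boundary of $V$: that walk's first $s+t$ terms have distinct residues but need not be contiguous, so your Step 1 is what actually justifies the paper's claim that every length-$(s+t)$ window is internally connected.

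One caveat is inherited from the paper. Since $|V|=2\lfloor N/2\rfloor$, for odd $N$ with $s+t=N$ no window of size $s+t$ fits in $V$, so \cref{lemma:sliding_window} fails as stated. The theorem still holds in that case because $|V|\ge s+t-1$ already suffices (Fine--Wilf).
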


\subsection{Optimal co-prime shift design}

With connectivity established, we now determine shift pairs that minimize the maximum hop distance. \cref{lemma:min_hops} establishes a universal lower bound on the maximum hop distance and~\cref{theorem:optimal_coprime_pair} shows that co-prime shifts near $\sqrt{N/2}$ achieve this bound.

\begin{lemma}[Lower bound on maximum hop distance]
    \label{lemma:min_hops}
    Consider the graph $V$ with $N$ nodes defined in~\cref{eq:graph_nodes} and edges $E(\{s,t\})$ as in~\cref{eq:graph_edges}. Let $h^\star$ denote the minimum number of hops required to reach all nodes from a reference node. Then, for any choice of two shifts,
    \begin{equation}
    h^\star \ge \left\lceil \tfrac{-1 + \sqrt{2N - 1}}{2} \right\rceil.
    \end{equation}
\end{lemma}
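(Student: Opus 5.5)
The plan is a counting (volume) argument: bound how many distinct nodes can be reached from the reference node $0$ within $h$ hops, and compare that number with $N$. If every node of $V$ is reachable within $h^\star$ hops, then at least $N$ distinct nodes lie within hop distance $h^\star$ of $0$. So any upper bound $R(h)$ on the number of nodes reachable in at most $h$ hops forces $R(h^\star)\ge N$.

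\textbf{Step 1 (displacement representation).} Consider any path of length $\ell\le h$ starting at $0$ in the graph with edge set $E(\{s,t\})$. Each hop changes the current index by one of $+s,-s,+t,-t$, so the endpoint equals $as+bt$ for integers $a,b$. Here $a$ (resp.\ $b$) is the net number of $s$-hops (resp.\ $t$-hops). Since every hop contributes to exactly one of the two counts, $|a|+|b|\le \ell\le h$. The requirement that intermediate nodes stay inside $V$ only removes paths, so it can only shrink the reachable set. Hence the set of nodes within $h$ hops is contained in
\[
\{as+bt : a,b\in\mathbb{Z},\ |a|+|b|\le h\}.
\]

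\textbf{Step 2 (counting lattice points).} The map $(a,b)\mapsto as+bt$ has image of size at most the number of integer pairs with $|a|+|b|\le h$. I would count these pairs by fixing $a$: there are $2(h-|a|)+1$ admissible values of $b$. Summing over $a$ gives
\[
(2h+1)+2\sum_{m=1}^{h}\bigl(2(h-m)+1\bigr)=2h^2+2h+1 .
\]
So $R(h)\le 2h^2+2h+1$. Degenerate choices, such as $s=t$ or a single effective shift, make the image strictly smaller, so the bound holds for any two shifts.

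\textbf{Step 3 (solving the inequality).} Full coverage within $h^\star$ hops requires $2(h^\star)^2+2h^\star+1\ge N$. The positive root of $2h^2+2h+1-N=0$ is $h=\tfrac{-1+\sqrt{2N-1}}{2}$, so $h^\star$ must be at least this root. Since $h^\star$ is an integer, it is at least the ceiling of the root.

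The only point needing care is Step 1: hop distance is a minimum over paths, and a path may revisit nodes or be blocked by the boundary of $V$. Both effects only reduce the set of reachable nodes, so the containment above, and hence the upper bound, still holds. I do not expect any further obstacles.
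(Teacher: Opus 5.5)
Your proposal is correct and follows essentially the same route as the paper: both write the nodes reachable within $h$ hops as $\alpha s+\beta t$ with $|\alpha|+|\beta|\le h$, bound their number by $2h^2+2h+1$, solve the resulting quadratic, and take the ceiling. The differences are cosmetic: you count lattice points by fixing $a$ rather than summing over the diamond layers $|\alpha|+|\beta|=i$, and you state explicitly what the paper leaves implicit, namely that boundary constraints and revisited nodes can only shrink the reachable set.
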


\begin{theorem}[Optimal shift pair]
    \label{theorem:optimal_coprime_pair}
    For a graph with $N$ nodes as in~\cref{eq:graph_nodes} and edges defined by two shifts $s,t\in\mathbb{N}$, $
    s = \left\lfloor \sqrt{\frac{N}{2}} \right\rfloor, 
    t = \left\lfloor \sqrt{\frac{N}{2}} \right\rfloor + 1,$
    achieve the minimum possible maximum hop distance $h^\star$ given in~\cref{lemma:min_hops}.
\end{theorem}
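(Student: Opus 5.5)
The plan is to prove matching upper bounds. The lower bound is already supplied by \cref{lemma:min_hops}, so it suffices to show that for $s=\lfloor\sqrt{N/2}\rfloor$ and $t=s+1$, every node $x\in V$ lies within $h^\star=\lceil(-1+\sqrt{2N-1})/2\rceil$ hops of the reference node $0$. For intuition, \cref{lemma:min_hops} is the counting bound: at most $2h^2+2h+1$ lattice points $as+bt$ satisfy $|a|+|b|\le h$, and this must be at least $N$. Since $\gcd(s,s+1)=1$ and $s+t\le N$, \cref{theorem:connectivity} already guarantees that the graph is connected. The remaining task is quantitative: bound the hop distance explicitly.

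\textbf{Step 1 (paths from representations).} Any representation $x=as+bt$ with $a,b\in\mathbb{Z}$ gives a walk of $|a|+|b|$ hops from $0$ to $x$. I first check that the steps can be ordered so the walk stays inside $V$. Interleaving the $+$ and $-$ steps greedily keeps every partial sum between $0$ and $x$ up to an overshoot of at most $t$. Because $V$ extends to roughly $\pm N/2$ and $t\approx\sqrt{N/2}\ll N/2$, this overshoot is harmless except near the boundary, and I would handle the boundary separately. So the hop distance is at most $\min\{|a|+|b| : as+bt=x\}$.

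\textbf{Step 2 (explicit representation).} With $t=s+1$ we have $as+b(s+1)=(a+b)s+b$. Writing $c=a+b$ gives $x=cs+b$ at cost $|c-b|+|b|$. By symmetry of $V$ (up to one endpoint), it suffices to treat $0\le x\le N/2$. I will take $c=\lceil x/(s+1)\rceil$ and $b=x-cs$, or alternatively $c=\lfloor x/s\rfloor$ with remainder $b\in[0,s-1]$. The first choice gives $0\le c-b$ and $b\ge 0$ in the relevant range, so the cost is simply $c$. When $b$ would exceed $c$, I switch to $c+1$ and a negative $b$, which gives cost $\max(c,\,2|b|-c)$-type bounds. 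The aim is to show the cost is at most $\lceil x/(s+1)\rceil$ plus a bounded correction. Together with $x\le N/2$ and $s^2\le N/2<(s+1)^2$, this yields a cost of at most about $s$, or $s+1$ in the worst residue classes.

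\textbf{Step 3 (matching $h^\star$).} Finally I compare the bound from Step 2 with $h^\star$. Writing $N/2\in[s^2,(s+1)^2)$, I will express $h^\star$ in terms of $s$: it equals $s$ when $2N-1\le(2s+1)^2$, i.e.\ $N\le 2s^2+2s+1$, and $s+1$ otherwise. I then verify, case by case on these two subranges of $N$, that the maximal cost over $x\in V$ from Step 2 does not exceed $h^\star$. The diamond-count intuition makes this plausible: the points $cs+b$ with $|c-b|+|b|\le h$ tile an interval of length about $2h^2+2h+1$ with no gaps when $t=s+1$.

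\textbf{Main obstacle.} I expect the hard part to be this last, floor/ceiling-sensitive step. The crude bound costs $\le s+1$, but we need cost $\le s$ exactly when $N\le 2s^2+2s+1$. This requires showing that the extreme nodes near $\pm N/2$ have representations costing only $s$, for instance $x=s^2+b$ realized as $(s-b)s+b(s+1)$, which costs $s$ for $0\le b\le s$. These extremes must also be reachable without leaving $V$. My first attempt is to characterize the set $\{cs+b : |c-b|+|b|\le h\}$ exactly as the full interval $[-(h^2+h),\,h^2+h]$ for $s=h$. I would then check that $V\subseteq[-(h^2+h),h^2+h]$ whenever $N\le 2h^2+2h+1$.
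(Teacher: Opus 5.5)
Your plan is essentially the paper's proof: the paper likewise rewrites $x=\alpha s+\beta(s+1)=us+v$ with cost $|u-v|+|v|$, notes that for each $|u|\le h$ the admissible $v$ form the consecutive block $\lceil (u-h)/2\rceil\le v\le\lfloor (u+h)/2\rfloor$, shows that consecutive blocks in $u$ abut once $h\ge s$ so that the $h$-hop set is exactly $[-h(s+1),\,h(s+1)]$, and then splits on $h^\star\in\{s,s+1\}$ exactly as in your Step~3. Proving that interval statement for every $h\ge s$ (not just $h=s$) settles your case $h^\star=s+1$ directly and should replace the Step~2 bookkeeping, whose choice $c=\lceil x/(s+1)\rceil$ gives $b<0$ and a cost as large as $2s-1$ for small $x$ (e.g.\ $x=1$); your Step~1 check that walks can be ordered to stay inside $V$ is something the paper skips entirely, so it is a genuine improvement rather than a detour.
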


\subsection{Noise analysis and optimality}

We conduct simulations to validate our analysis that reconstruction error increases with hop length and that optimal shifts~\cref{theorem:optimal_coprime_pair} minimize error accumulation. We generate noisy phase differences~\cref{eq:phase-diff} by adding Gaussian noise to ground-truth phases and recover the phase using \cref{alg:bfs_phase}.

As shown in the left panel of \cref{fig:mae_vs_hop_length} for $N=512$, the cumulative phase error increases monotonically with hop distance, confirming that longer propagation paths accumulate more noise. The optimal shifts $s=\lfloor\sqrt{N/2}\rfloor=16$ and $t=s+1=17$ achieve the minimal $h^\star=16$ and yield the lowest error curve. The right panel of \cref{fig:mae_vs_hop_length} further shows that the hops for our optimal shifts are tightly concentrated, while other choices produce larger maximum hops. We provide additional results in the supplementary material that support our analysis.

\begin{figure}[t]
    \centering
    \includegraphics[width=0.9\linewidth]{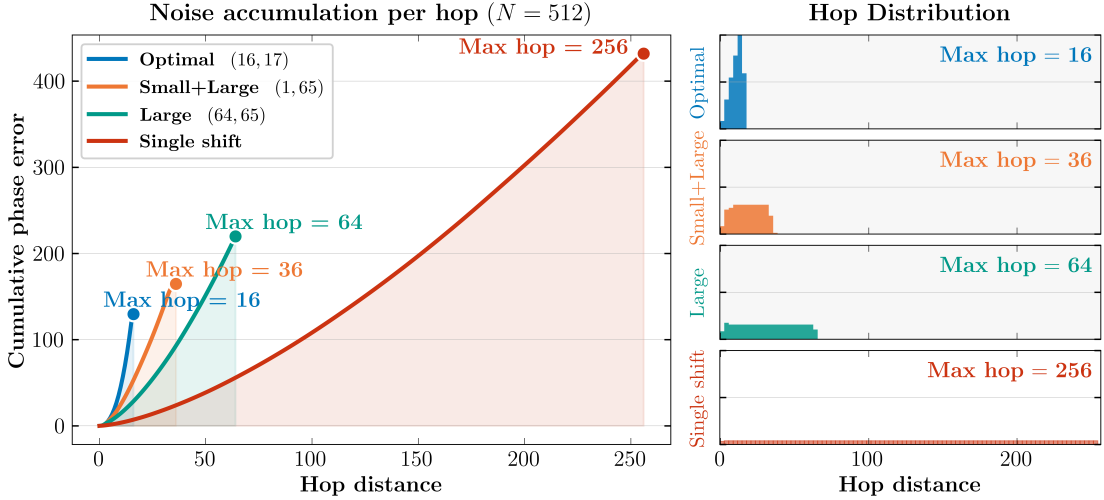}
    \caption{\textbf{Cumulative phase error vs. hop length} for $N=512$ (left) and hop-length histograms (right) for different shift choices. The optimal co-prime pair $(16,17)$ minimizes the maximum hop length and yields the lowest cumulative error. Other shift choices increase the maximum hop length and lead to higher error.}
    \label{fig:mae_vs_hop_length}
    \vspace{-1em}
\end{figure}

\section{Experiments}

\subsection{Simulation experiments}
We evaluate phase recovery from shifted interferometric measurements in a synthetic setting and compare reconstruction accuracy with related baseline methods. To generate simulated data, we use 100 images from the DIV2K validation dataset~\cite{agustsson2017ntire} as ground truth intensity patterns. For each image, we synthesize complex wavefronts by combining the intensity with three types of phase profiles, including random phase, quadratic phase, and smooth phase profiles\footnote{The MATLAB \texttt{peaks} function is used to generate smooth synthetic profiles.}. We simulate interferometric measurements using \cref{eq:PSI_main} with Poisson and Gaussian noise at multiple SNR levels and multiple shifts. We used the optimal shifts $\Delta_{1} = 16$ and $\Delta_{2} = 17$ for the 16-measurement setup (\ie 2 shifts × 2 directions (H, W) × 4 quadratures), and included $\Delta_3=23$ and $\Delta_4 =31$ for the 32-measurement experiments.

We compare against gradient descent–based phase retrieval with random initialization (\textbf{GD-Rand}) and spectral initialization~\cite{netrapalli2013phase} (\textbf{GD-Spec}), a plug-and-play recovery method (\textbf{PnP-FISTA})~\cite{kamilov2017plug} that uses a pretrained \textbf{DRUNet} denoiser~\cite{zhang2021plug} as a denoiser prior, the wavefront sensing method \textbf{WISH}~\cite{wu2019wish} that uses Gerchberg–Saxton iterations~\cite{gerchberg1972practical, fienup1982phase}, and a Deep Image Prior (\textbf{DIP})–based approach~\cite{ulyanov2018deep} using a  convolutional decoder architecture from~\cite{darestani2021accelerated} as a phase prior. Although the original WISH uses random phase modulation, we replace it with phase ramps to generate shifted measurements in our setup. Additional details are provided in the supplementary material.

We evaluate phase recovery using the mean absolute phase error after global phase correction. Given estimated phasors $\widehat{\bmp}$ and ground-truth phasors $\bmp_{gt}$, we estimate the global phase offset $\theta$ from the circular mean of their phase differences and measure the residual error as
$\mathcal{E}_{\text{phase}} = \frac{1}{N} \sum_{i=1}^{N} 
\Big|\angle\big(\widehat\bmp(i) \bmp^*_{gt}(i) e^{-j \theta}\big)\Big|.$
We report the mean phase error ($\pm$ standard deviation) at SNR $\approx 22$ dB in~\cref{tab:phase_error_snr_22}, where our method achieves the lowest error for both quadratic and smooth phase profiles. As shown in~\cref{fig:viscomp_snr22_peak}, with 16 measurements it is the only approach that reconstructs a phase profile consistent with the ground truth. For random phase, gradient descent with random initialization yields a lower error (see~\cref{fig:viscomp_snr22_peak}). The performance gap is small and disappears at higher SNR, as shown in~\cref{fig:comp_methods}. \Cref{fig:comp_methods} further shows that least-squares refinement (\textbf{Ours + LS}) yields clear improvements under noisy measurements. Additional results under varying noise and measurement settings are provided in the supplementary material.

\noindent
\textbf{Computation time.} We compare runtime per reconstruction across methods on a single NVIDIA RTX 6000 Ada GPU. Our approach is the fastest at $1.2$ s, compared to $1.5$ s for gradient-based optimization, $3.6$ s for WISH, $8.7$ s for DIP, and $10.7$ s for PnP-FISTA. Overall, this corresponds to an approximately $8\times$ reduction in computation time relative to the slowest baseline.

\begin{table}[t]
\centering
\small
\setlength{\tabcolsep}{3pt}
\caption{Phase errors (mean $\pm$ std) at 22\,dB SNR. Our method achieves the best or second-best results across all phase profiles. (Best bold, second-best underlined).}
\label{tab:phase_error_snr_22}
\resizebox{0.95\textwidth}{!}{%
\begin{tabular}{lcccccc}
\toprule
& \multicolumn{2}{c}{Quadratic phase} & \multicolumn{2}{c}{Random phase} & \multicolumn{2}{c}{Smooth phase (peaks) } \\
\cmidrule(lr){2-3} \cmidrule(lr){4-5} \cmidrule(lr){6-7}
\diagbox{Method}{\#Meas.} & 16 & 32 & 16 &  32 & 16 &  32 \\

\midrule

GD-Rand & 0.695 $\pm$ 0.365 & 0.470 $\pm$ 0.434 & \textbf{0.138 $\pm$ 0.148} & \textbf{0.079 $\pm$ 0.152} & 0.791 $\pm$ 0.361 & 0.570 $\pm$ 0.404 \\
GD-Spec & 0.737 $\pm$ 0.400 & 0.400 $\pm$ 0.432 & 0.691 $\pm$ 0.367 & 0.449 $\pm$ 0.440 & 0.686 $\pm$ 0.397 & 0.371 $\pm$ 0.374 \\
WISH & 0.852 $\pm$ 0.299 & 0.423 $\pm$ 0.366 & 0.854 $\pm$ 0.320 & 0.453 $\pm$ 0.377 & 0.898 $\pm$ 0.330 & 0.425 $\pm$ 0.367 \\
PNP-FISTA & 0.669 $\pm$ 0.341 & 0.456 $\pm$ 0.304 & 0.735 $\pm$ 0.329 & 0.425 $\pm$ 0.264 & 0.663 $\pm$ 0.341 & 0.479 $\pm$ 0.338 \\
DIP & 1.486 $\pm$ 0.031 & 1.491 $\pm$ 0.051 & 1.495 $\pm$ 0.037 & 1.447 $\pm$ 0.065 & 1.442 $\pm$ 0.063 & 1.423 $\pm$ 0.081 \\ [0.5ex]

\hdashline \noalign{\vskip 0.5ex}
Ours & \textbf{0.158 $\pm$ 0.104} & \textbf{0.099 $\pm$ 0.083} & \underline{0.156 $\pm$ 0.099} & \underline{0.094 $\pm$ 0.075} & \textbf{0.155 $\pm$ 0.096} & \textbf{0.097 $\pm$ 0.071} \\
Ours with LS & \underline{0.183 $\pm$ 0.082} & \underline{0.126 $\pm$ 0.065} & 0.158 $\pm$ 0.075 & 0.121 $\pm$ 0.055 & \underline{0.205 $\pm$ 0.063} & \underline{0.159 $\pm$ 0.050} \\
\bottomrule
\end{tabular}
}

\end{table}

\begin{figure}[t] %
    \centering
    \includegraphics[width=\linewidth]{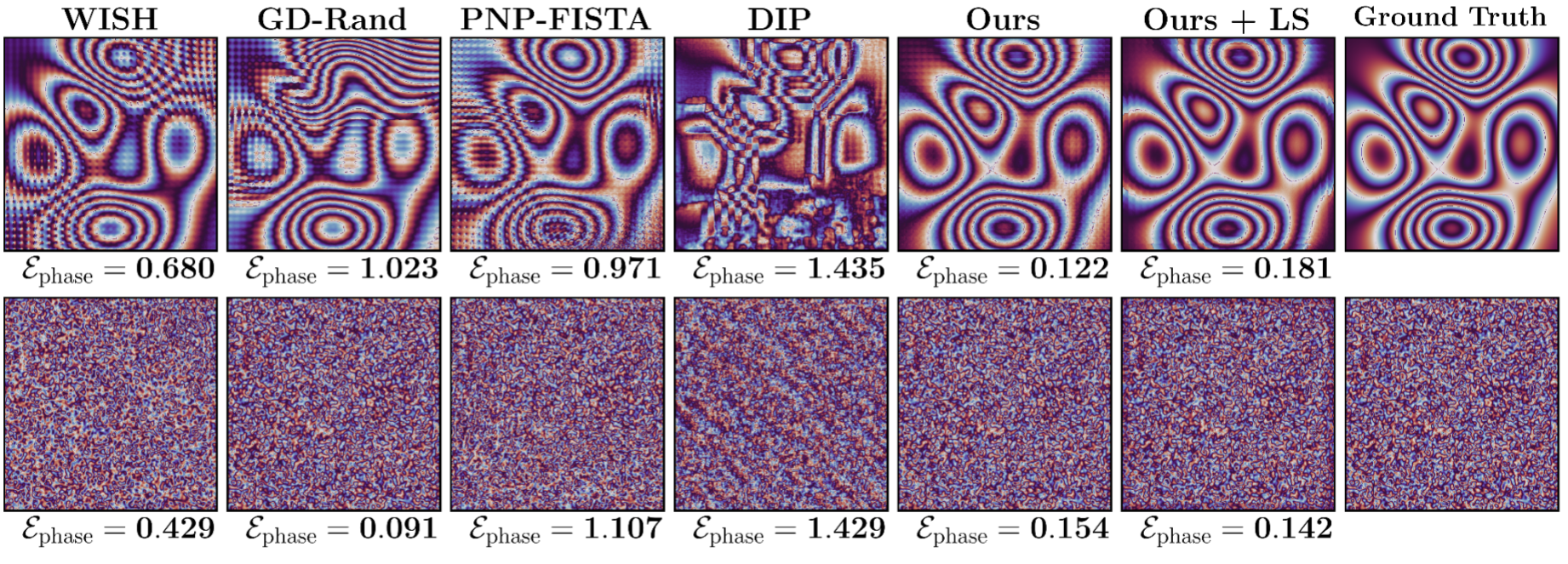}
    \caption{\textbf{Phase recovery of smooth and random phase profiles} (SNR 22 dB, $16$ measurements). Our method achieves the lowest error on smooth profiles and remains competitive for random phases.}

    \label{fig:viscomp_snr22_peak}
\end{figure}

\begin{figure}[t]
    \centering
    \includegraphics[width=\linewidth]{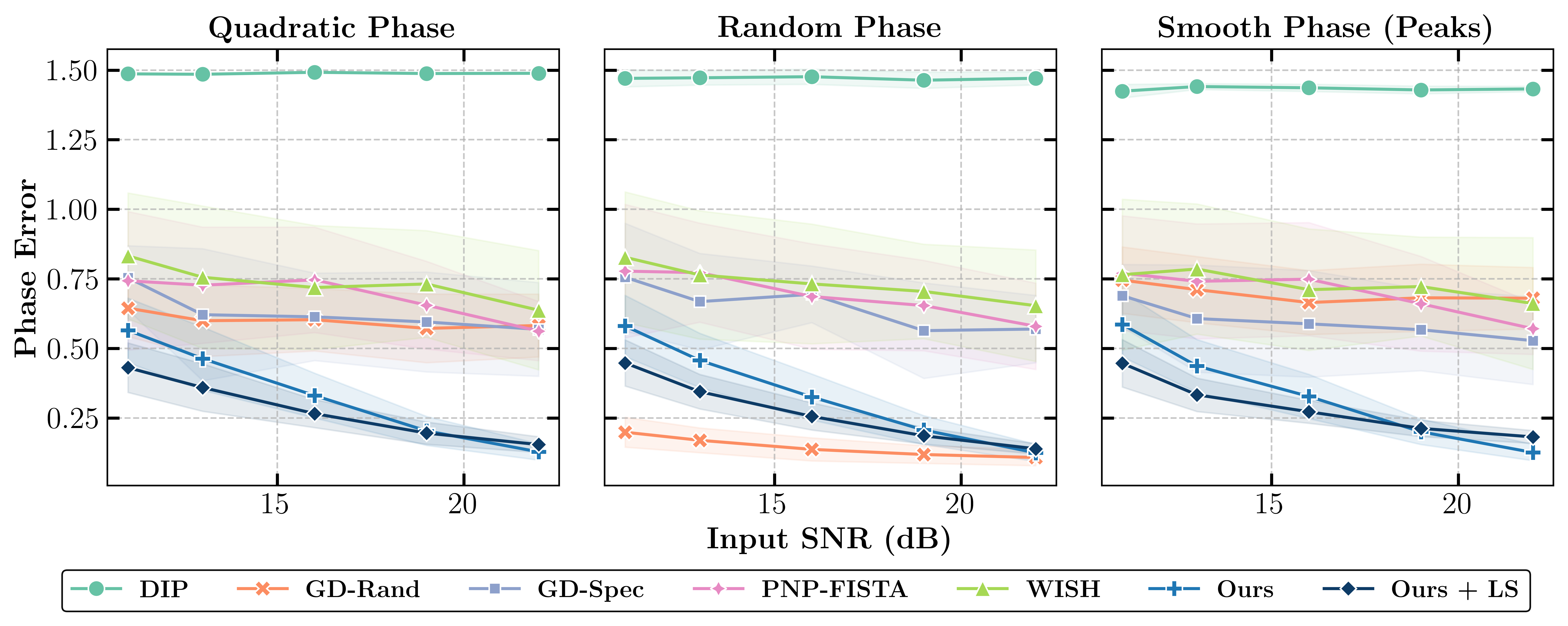}
    \caption{\textbf{Phase recovery performance across SNRs in simulations.} Our method demonstrates consistent performance across all phase types, maintaining the lowest error in most settings and highly competitive results otherwise. Notably, least-squares refinement (\textit{Ours\,+\,LS}) significantly improves reconstruction accuracy at low SNRs.}
    \label{fig:comp_methods}
\end{figure}

\newpage 
\subsection{Hardware prototype and real experiments} 
\begin{figure}[t]
    \centering
       \includegraphics[width=0.6\linewidth]{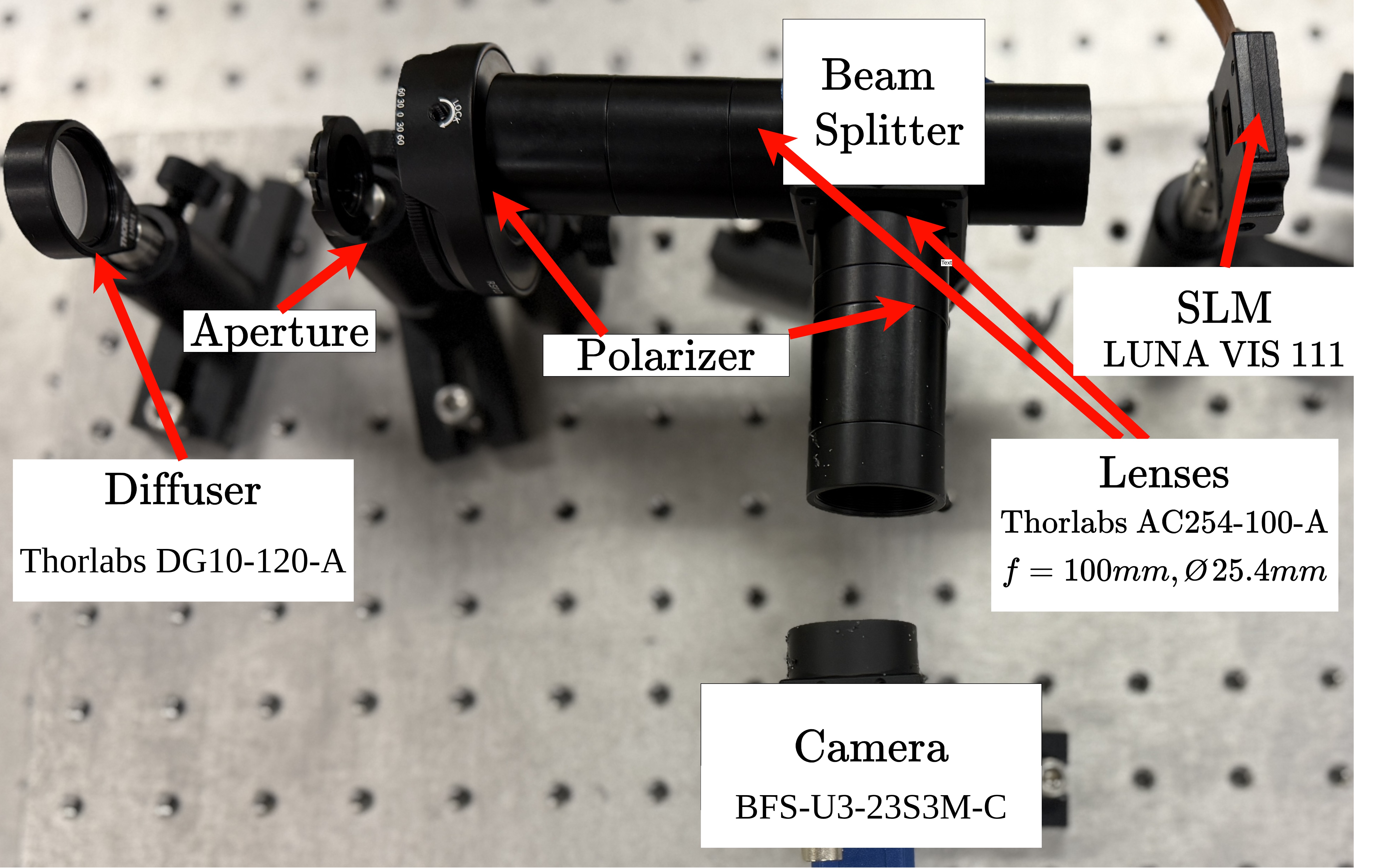}
    \caption{Hardware prototype.}
    \label{fig:hardware_prototype}
\end{figure}
We implement a hardware prototype shown in \cref{fig:hardware_prototype}, following the optical layout in Fig.\ \ref{fig:intro_figure}, to further validate the proposed self-referenced PSI framework. A $4f$ optical system with a phase-only SLM at the Fourier plane generates spatial shifts via phase ramps. This setup allows us to record both interference patterns for phase retrieval and unmodulated captures of the scene to directly measure the amplitude. 
Additional hardware implementation details are provided in the supplementary material.

\noindent
\textbf{Optical phase recovery.} We place objects in the input plane and recover their phase profile. \Cref{fig:real_exp_lens} shows the recovered phase profile of a convex lens placed in the object plane under different shifts. We observe that the optimal shift pairs $(24, 25)$ for the setup produce the most accurate and stable reconstructions. Other shift pairs, as shown in~\cref{fig:real_exp_lens}, lead to suboptimal phase estimates. 

\begin{figure}[t]
    \centering
    \includegraphics[width=0.9\linewidth]{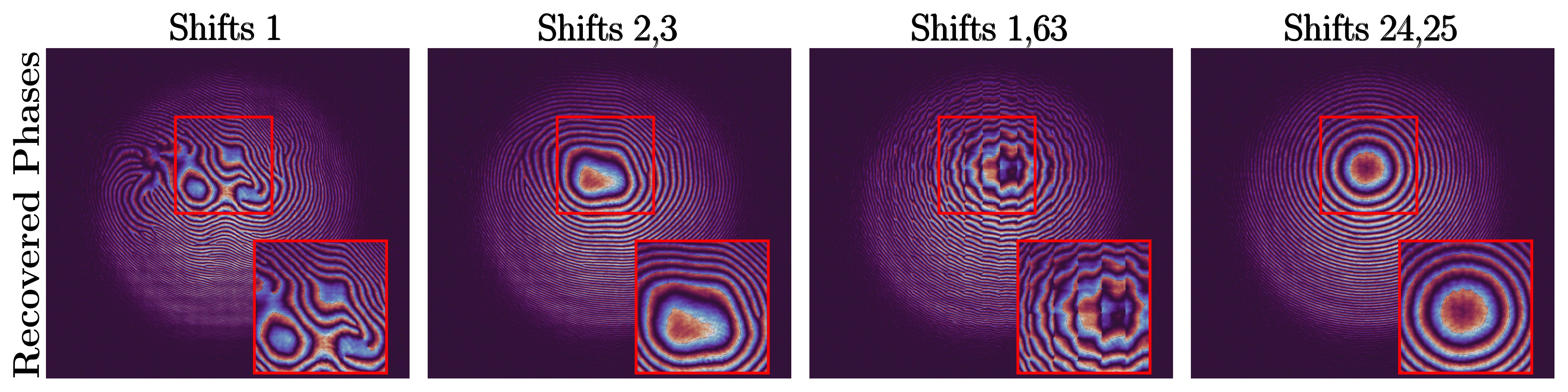}
   \caption{\textbf{Optical profile recovery of a convex lens in real experiments.} Optimal coprime shifts $(24, 25)$ provide a smooth and accurate phase recover of the thin lens. In contrast, other non-optimal shifts introduce phase discontinuities and artifacts.}
    \label{fig:real_exp_lens}
\end{figure}

\noindent
\textbf{Target refocusing.} We place a target away from the input plane and capture it out-of-focus. Our algorithm recovers the complex field at the input plane, whose amplitude remains blurred (first column of \cref{fig:target_focusing_star}). Applying Fresnel propagation~\cite{goodman2005introduction} to the recovered wavefront, the target progressively refocuses with increasing propagation distance, yielding a sharp image at the correct depth (last column of \cref{fig:target_focusing_star}). This confirms the recovered phase is physically consistent and preserves the correct wavefront profile required for accurate refocusing. Additional refocusing examples are provided in the supplementary material.

\begin{figure}[bt]
    \centering
    \includegraphics[width=\textwidth]{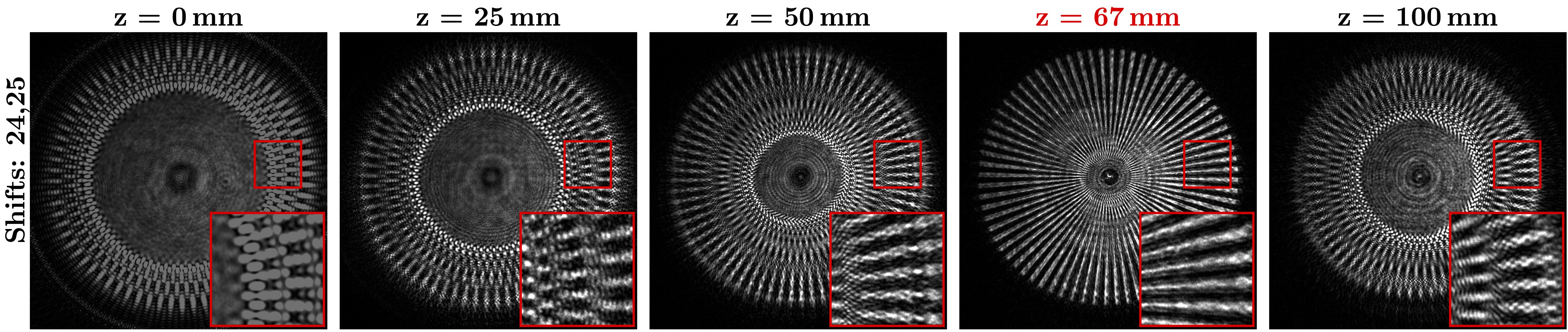}
    \caption{\textbf{Target refocusing in real experiments.} We recover an out-of-focus the complex field and numerically propagate it to multiple distances. Optimal focus is achieved at $z = 67$ mm, 
    where the star pattern and fine features are clearly resolved.}
    \label{fig:target_focusing_star}
\end{figure}

\noindent
\textbf{Seeing through a diffuser.} We place a thin diffuser (DG10-120-A, Thorlabs) in front of the target, which strongly scatters the incident wavefront and produces a speckle-like intensity pattern (first column of~\cref{fig:seeing_through_diff}). Our method reconstructs the complex field at the input plane (second column), which is then numerically propagated to the diffuser plane. There, we undo the effects of the diffuser by multiplying with the conjugate of the calibrated diffuser phase profile. The phase maps after correction (third column of~\cref{fig:seeing_through_diff}) show the structure of the target. Propagating the corrected field to the target plane reveals the hidden object in both amplitude and phase (last column).

\begin{figure}[ht]
    \centering
    \includegraphics[width=\linewidth]{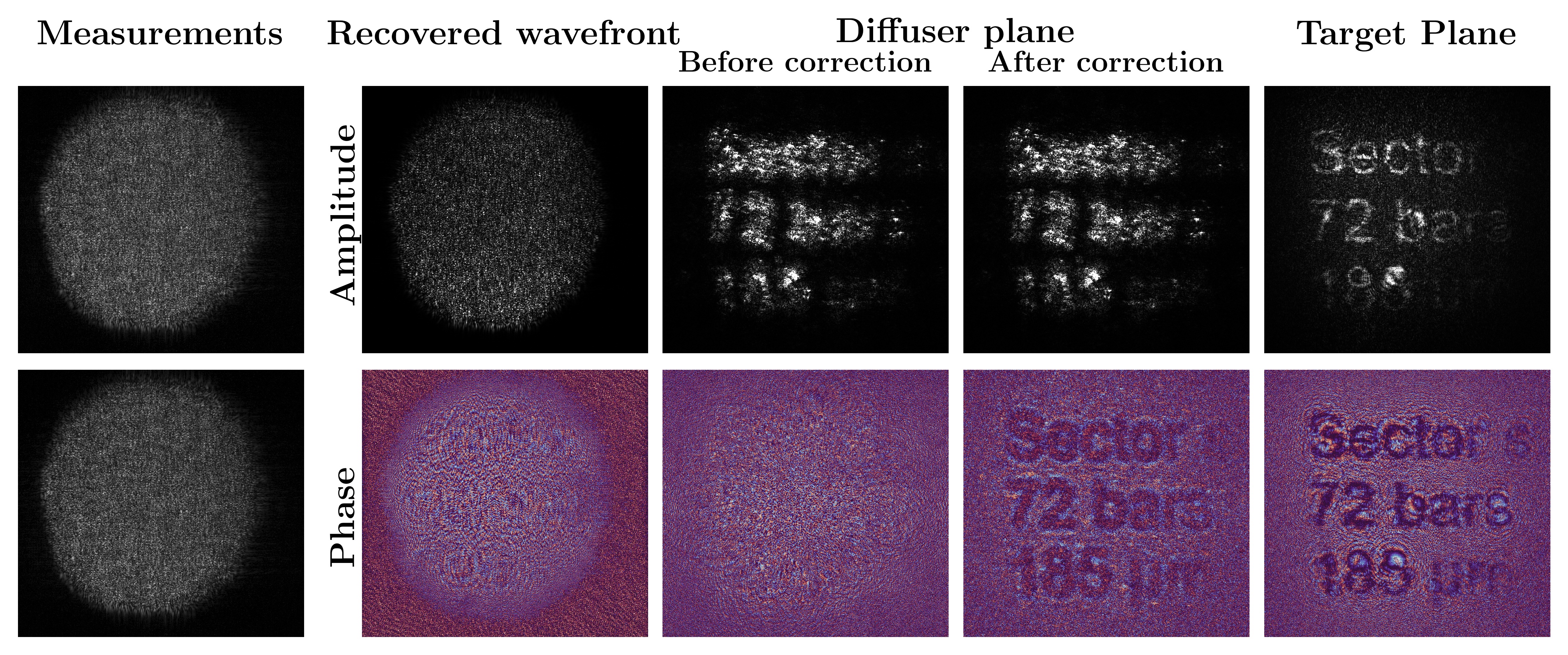}
   \caption{\textbf{Seeing through a diffuser in real experiments.} We recover the scattered complex field and computationally propagate it to the diffuser plane for phase correction. Subsequent propagation to the object plane successfully reveals the target, which is otherwise invisible.}
    \label{fig:seeing_through_diff}
\end{figure}

\section{Conclusion}
We present a principled framework for wavefront phase recovery based on shifted self-interference measurements. Our method provides a simple and analytical solution for phase retrieval by propagating phase differences on a connected graph. The choice of measurement shifts controls the graph structure and error accumulation. We proved that co-prime shift pairs guarantee full graph connectivity and derived a provably optimal shift design that achieves the minimum hop length and worst-case error. Simulation experiments confirm that our optimal shift design consistently outperforms several existing methods. Real hardware experiments validate our approach on practical applications. These results demonstrate that our theoretically grounded measurement design and recovery algorithm provide an accurate and robust wavefront sensing method.

\section*{Acknowledgements}
This paper is partially based on work supported by the NSF CAREER award CCF-2046293.

\clearpage
\bibliographystyle{splncs04}
\bibliography{main}

\begin{center}
{\LARGE \bfseries Supplementary Material}
\end{center}

\newcommand{\beginsupplement}{%
        \setcounter{table}{0}
        \renewcommand{\thetable}{A\arabic{table}}%
        \setcounter{figure}{0}
        \renewcommand{\thefigure}{A\arabic{figure}}%
        \setcounter{section}{0}
        \setcounter{lemma}{0}
        \setcounter{theorem}{0}
        \renewcommand{\thesection}{A\arabic{section}}%
}

\beginsupplement

This supplementary material provides theoretical proofs, implementation details, experimental settings, and additional results supporting the main paper. Complete proofs of all lemmas and theorems are presented in~\cref{supp:sec:proofs}, along with additional noise analysis in~\cref{supp:sec:noise_analysis}. We discuss additional algorithm implementation details in~\cref{supp:sec:algo_details}. We provide hardware specifications, experimental settings, and additional real and simulated experiments in~\cref{supp:sec:hardware,supp:sec:comp_details,supp:sec:sim_exp}.

\section{Proofs of Theoretical Results}
\label{supp:sec:proofs}
In this section, we provide complete proofs of our lemmas and theorems stated in the main paper. For completeness, we first restate the relevant notation. Each pixel or measurement location $i \in \{-\lfloor {N}/{2}\rfloor, \dots, \lfloor {N}/{2}\rfloor-1\}$ corresponds to a \emph{node} in the graph. The set of graph nodes is
\begin{equation}
V = \left\{-\left\lfloor\frac{N}{2}\right\rfloor, \dots, \left\lfloor\frac{N}{2}\right\rfloor - 1\right\},
\label{supp:eq:graph_nodes}
\end{equation}
where $|V| = N$. 
An \emph{edge} exists between node $i$ and $j$ if we have direct phase difference measurements between them as $\nabla_k \bmphi$. For a set of shifts $\mathcal{S} = \{s_1, s_2, \dots, s_K\}$, the edge set is
\begin{equation}
    \label{supp:eq:graph_edges}
    E(\mathcal{S}) = \bigl\{ (i, i \pm s_k) : i \pm s_k \in V,\; s_k \in \mathcal{S} \bigr\}.
\end{equation}
In other words, each node has two neighbors for each shift. We call one edge traversal as a step or hop and the total number of hops/steps required to go from a reference node to any other node as the hop distance.

\subsection{Connectivity guarantees}

We now provide a complete proof of our~\cref{supp:theorem:connectivity}. Our main objective is to show that any node in the graph can be accessed from the reference node through a sequence of hops/steps along the connected nodes. In other words, given two co-prime integers $s,t$, we can represent any node as $n = \alpha s + \beta t$, using integers $\alpha,\beta$ that keep the entire path within the graph $(V,E)$.  

Our required condition resembles the well-known Bezout's identity~\cite{Bezout1779theorie}, which states that any integer can be written as a linear combination of two co-primes $s$ and $t$. This identity alone does not prove connectivity in our case because the intermediate nodes visited by such combinations are not guaranteed to remain within the valid range of $V$. In other words, the identity only states the existence of integers $\alpha,\beta$ such that $\alpha s + \beta t = n$, but the corresponding sequence of hops (\ie $\alpha s$ or $\beta t$) may leave the valid domain before reaching the desired node. Therefore, we need some additional constraints on $s$ and $t$ to prove that valid propagation paths exist entirely inside $V$ that connect all nodes.

Our proof proceeds in two steps. First, \cref{supp:lemma:co-prime_cover_all_residues} shows that when $s$ and $t$ are co-prime, repeated hops of size $s$ generate all possible residues modulo $s+t$ (\ie $0,1,\dots,s+t-1$). Note that a $-t$ step is equivalent to a $+s$ step modulo $s+t$ since $-t \equiv s \pmod{s+t}$. Thus, propagation using $+s$ hops, with $-t$ hops when the traversal reaches the boundary of the interval, visits all residues modulo $s+t$. As illustrated in \cref{supp:fig:connectivity}, these hops cover every index within any interval of length $s+t$. Next, \cref{supp:lemma:sliding_window} shows that such intervals can be translated across the domain while remaining inside $V$ as long as $s+t \leq N$. This allows connectivity to propagate from the reference node to the entire graph. Combining these results establishes the global reachability and connectivity stated in~\cref{supp:theorem:connectivity}.

\begin{figure}[ht]
    \centering
    \includegraphics[width=\linewidth]{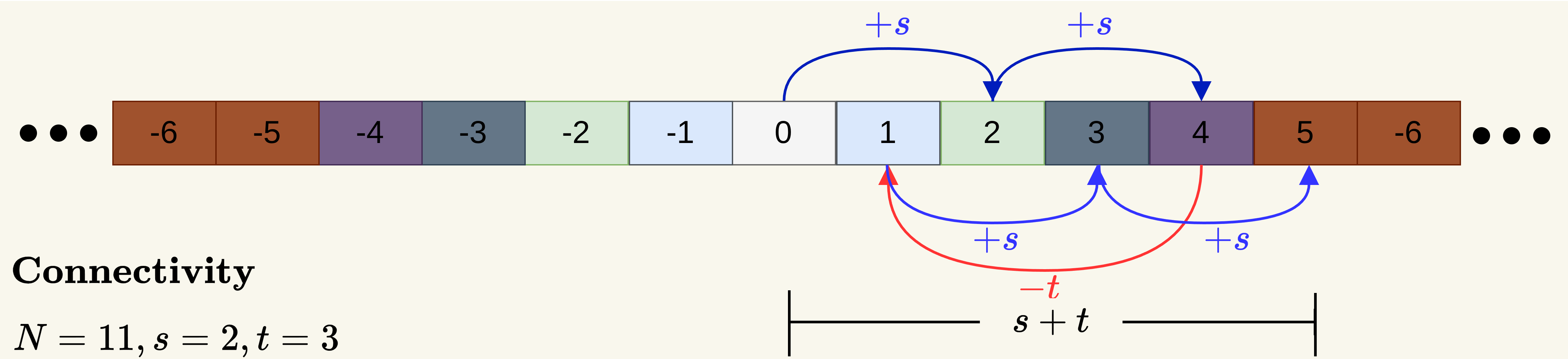}
   \caption{Illustration of the connectivity using two co-prime shifts for $N=11$, $s=2$, and $t=3$ using hops  $+s$ and $-t$. 
    Blue arrows represent forward propagation using $+s$, while the red arrow represents taking a step of size $-t$ when the boundary is reached. 
      Over an interval of length $s+t$ since $s$ and $t$ are co-prime, the hops will visit all nodes without repetition.}
    \label{supp:fig:connectivity}
\end{figure}

\label{supp:sec:connectivity}

\begin{lemma}[Co-prime Residue Coverage]
\label{supp:lemma:co-prime_cover_all_residues}
Let $k\ge 1$, $s\in\mathbb{Z}$ and $\gcd(s,k)=1$. Then for any $p\in\mathbb{Z}$,
\[
\{ (p + is)\bmod k \;:\; i=0,1,\dots,k-1 \,\}=\{0,1,\dots,k-1\}.
\]
\end{lemma}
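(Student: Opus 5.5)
The plan is the standard pigeonhole argument: show that the map $i \mapsto (p+is) \bmod k$ on $\{0,1,\dots,k-1\}$ is injective, then conclude surjectivity by counting. Since the map sends a $k$-element set into the $k$-element set $\{0,1,\dots,k-1\}$, injectivity forces the image to be all of $\{0,1,\dots,k-1\}$, which is exactly the claimed equality.

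For injectivity, suppose $0 \le i < i' \le k-1$ and $(p+is)\bmod k = (p+i's)\bmod k$. Subtracting gives $k \mid (i'-i)s$. Because $\gcd(s,k)=1$, Euclid's lemma yields $k \mid (i'-i)$. Euclid's lemma itself follows from Bezout's identity: write $1 = as + bk$ and multiply by $(i'-i)$. But $0 < i'-i < k$, so $k \nmid (i'-i)$, a contradiction. Hence the $k$ residues are pairwise distinct.

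The argument is routine, and the only care needed is with edge conventions. We take $\bmod$ to return the least nonnegative residue, which also handles negative $p$ or $s$, since only divisibility of differences is used. The case $k=1$ is trivial: both sets equal $\{0\}$, and $\gcd(s,1)=1$ holds automatically. No earlier result of the paper is needed.

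For later use in \cref{supp:lemma:sliding_window} and the theorem, we will apply the lemma with $k=s+t$. Here $\gcd(s,s+t)=\gcd(s,t)=1$, so the lemma applies and repeated $+s$ hops, reduced modulo $s+t$, visit every residue.
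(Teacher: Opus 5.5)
Your proposal is correct and follows the paper's proof essentially step for step: assume two of the residues coincide, derive $k \mid (i'-i)s$, use $\gcd(s,k)=1$ to get $k \mid (i'-i)$, contradict $0 < i'-i < k$, and conclude by counting that the $k$ distinct residues fill $\{0,1,\dots,k-1\}$. Your added details (deriving Euclid's lemma from Bezout, fixing the least-nonnegative-residue convention, and checking $\gcd(s,s+t)=1$ for the later application) are all sound and slightly tighten the paper's version.
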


\begin{proof}
Consider the $k$ residues $\{ (p + is)\bmod k \;:\; i=0,1,\dots,k-1 \,\}$. Showing that each of the $k$ residues is unique will complete the proof, because we know there are $k$ possible non-negative remainders that are less than $k$. If they are distinct, it means that the set is exactly $\{0,1,\dots,k-1\}$. 

Let us assume that they are not distinct and for some $i_1,i_2$ where $0\le i_1<i_2\le k-1$ and $i_1 \neq i_2$ we have
\begin{equation}
    \label{supp:eq:lemma_1_cong}
    p + i_1s \equiv p + i_2s \pmod{k},
\end{equation}
which implies $(i_2-i_1)s \equiv 0 \pmod{k}$. This is true because \cref{supp:eq:lemma_1_cong} would imply that $\exists\,m,n \in \mathbb{N}$ such that 
\[
(p + i_1s)-nk = (p + i_2s) - mk \implies (i_2-i_1)s = (m-n)k.
\] 
We also know that $\gcd(s,k)=1$, $(i_2-i_1)s \equiv 0 \pmod{k}$, which leads to $i_2-i_1 \equiv 0 \pmod{k}$. This is a contradiction because we know $1\le i_2-i_1 \le k-1$ and $i_2-i_1 \equiv 0 \pmod{k}$ is valid if and only if $i_2 = i_1$, which contradicts our assumption.

The argument above proves that all $k$ residues are distinct and the set must exactly be $\{0,\dots, k-1\}$.  
\end{proof}

\begin{lemma}[Sliding Window Coverage]
\label{supp:lemma:sliding_window}
Let $s, t \in \mathbb{N}$ with $s + t \le N$, and let $V$ be defined as~\cref{supp:eq:graph_nodes}. For any $x \in V$, there exists an interval $I_1$ of size $s+t$ containing $x$, and a sequence of intervals $I_1, I_2, \dots, I_K$ of size $s+t$, each contained in $V$, such that consecutive intervals share at least one common element and $I_K$ contains $0$.
\end{lemma}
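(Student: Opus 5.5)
Set $L = s+t$ and write $a=-\lfloor N/2\rfloor$, $b=\lfloor N/2\rfloor-1$, so that $V=\{a,\dots,b\}$ is an integer interval with $b-a+1=N\ge L$. An \emph{interval of size $L$} means a set $\{c,c+1,\dots,c+L-1\}$, which lies in $V$ exactly when $c\in[a,\,b-L+1]$; this admissible range of starting points $c$ is nonempty because $N\ge L$. The plan is to build an explicit chain of windows by sliding the start point $c$ one step at a time, from a window containing $x$ to a window containing $0$.

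\textbf{Step 1: windows through any given point.} Fix $y\in V$ and choose the start $c(y)=\min\{\max\{y,a\},\,b-L+1\}$, i.e., $y$ clamped into the admissible range. I will check $c(y)\le y\le c(y)+L-1$. If $y\le b-L+1$, then $c(y)=y$ (using $y\ge a$), and $y$ is the first element of its window. Otherwise $c(y)=b-L+1<y\le b=c(y)+L-1$. In both cases the window lies in $V$ and contains $y$. Apply this with $y=x$ to get $I_1=\{c_1,\dots,c_1+L-1\}$, and with $y=0$ (note $0\in V$ since $a\le 0\le b$ whenever $N\ge 1$) to get a target start $c_0$.

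\textbf{Step 2: the chain.} Both $c_1$ and $c_0$ lie in the integer range $[a,b-L+1]$. Step through every integer between them, monotonically from $c_1$ to $c_0$, and let $I_j$ be the window starting at the $j$-th value. Every intermediate start point is admissible, so each $I_j\subseteq V$. Consecutive windows start at integers differing by one, so they overlap in $L-1$ elements. When $L\ge 2$, which holds because $s,t\ge 1$, that overlap contains at least one common element. The final window $I_K$ starts at $c_0$ and therefore contains $0$. If $c_1=c_0$, we simply take $K=1$.

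\textbf{Main obstacle.} The only delicate point is the boundary bookkeeping: every window must stay inside $V$, which has the asymmetric endpoints $a$ and $b$. This is handled by the clamping in Step 1, together with the observation that the admissible start set is a nonempty integer interval, so it is closed under the monotone stepping in Step 2. The remaining checks are routine.
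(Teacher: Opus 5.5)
Your argument is correct and is essentially the paper's: clamp a window of length $s+t$ around $x$, then slide its start point one step at a time toward a window containing $0$, with consecutive windows overlapping in $s+t-1\ge 1$ elements. Your upper clamp $b-L+1=\lfloor N/2\rfloor-(s+t)$ is the right one: the paper's $\lfloor N/2\rfloor-s-t+1$ gives a window ending at $\lfloor N/2\rfloor\notin V$. You also check explicitly that every intermediate window stays in $V$, which the paper leaves implicit. One caveat applies to both your proof and the paper's: $b-a+1=2\lfloor N/2\rfloor$, which equals $N$ only for even $N$. For odd $N$ with $s+t=N$ (e.g.\ $N=5$, $s=2$, $t=3$, $V=\{-2,\dots,1\}$) no window fits, so the statement itself really needs the hypothesis $s+t\le 2\lfloor N/2\rfloor$.
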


\begin{proof}
Since $s+t \le N = |V|$, any interval of size $s+t$ fits within $V$. For any $x \in V$, set the starting point of $I_1$ as 
\begin{equation}    
 I_1 = [i_{0}, i_{0} +s + t - 1] \quad
\text{ where } \quad 
i_{0} = \min(x, \left\lfloor\frac{N}{2}\right\rfloor - s -t + 1),
\end{equation}
which contains $x$ and lies within $V$. We then construct $I_2, \dots, I_K$ by shifting each interval by one integer toward $0$, so the consecutive intervals overlap in $s+t-1$ elements. This sequence ends in at most $|i_0|$ steps, as each shift reduces the distance to $0$ by exactly one. Thus, we guarantee $I_K$ contains $0$.%
\end{proof}

\begin{theorem}[Connectivity via Co-Prime Shifts]
\label{supp:theorem:connectivity}
Let $s,t \in \mathbb{N}$ be co-prime, i.e., $\gcd(s,t)=1$, and let the graph $V$ be defined as~\cref{supp:eq:graph_nodes} and edges $E\left(\{s,t\}\right)$ defined following~\cref{supp:eq:graph_edges} using shifts $s$ and $t$. 
Then every node in $V$ is reachable from the reference node $0$ through a sequence of hops along edges, provided
$
s + t \le N.
$
\end{theorem}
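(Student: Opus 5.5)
The argument has two parts. First, any window of $s+t$ consecutive nodes inside $V$ is internally connected by edges that stay inside that window. Second, these windows can be chained to the reference node $0$ using \cref{supp:lemma:sliding_window}. Since edges are undirected, reachability is an equivalence relation. It therefore suffices to show that every node lies in the same connected component as $0$.

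\textbf{Step 1 (local connectivity of a window).} Fix an interval $I=[a,a+s+t-1]\subseteq V$ and set $m=s+t$. Consider the following walk, which never leaves $I$. From a node $x\in I$, if $x+s\le a+m-1$, step to $x+s$; otherwise step to $x-t$. In the second case $x\ge a+m-s=a+t$, so $x-t\ge a$ and the step stays in $I$. In both cases the new node is congruent to $x+s \pmod m$, because $-t\equiv s\pmod m$.

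Each node of $I$ is the unique representative of its residue class modulo $m$. So the $i$-th node of the walk started at $a$ is exactly the element of $I$ congruent to $a+is$. Since $\gcd(s,m)=\gcd(s,s+t)=\gcd(s,t)=1$, \cref{supp:lemma:co-prime_cover_all_residues} with $k=m$ and $p=a$ shows that the first $m$ nodes of the walk exhaust all residues. Hence the walk visits every node of $I$. Every step is an edge of $E(\{s,t\})$ with both endpoints in $I\subseteq V$, so $I$ is contained in a single connected component.

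\textbf{Step 2 (chaining windows).} Given any $x\in V$, apply \cref{supp:lemma:sliding_window} (which needs $s+t\le N$) to obtain intervals $I_1\ni x$, $I_2,\dots,I_K\ni 0$. Each interval has size $s+t$ and lies in $V$, and consecutive intervals share a node. By Step 1 each $I_j$ lies in one component. Shared nodes force $I_j$ and $I_{j+1}$ into the same component, so by induction $x$ and $0$ are in the same component. The path from $0$ to $x$ is obtained by concatenating the within-window walks through the shared nodes.

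\textbf{Main obstacle.} The delicate point is Step 1: the ``$+s$, else $-t$'' walk must never leave the window, and it must realize the residue sequence $a+is \bmod m$ exactly. I would verify the boundary inequality $x\ge a+t$ in the $-t$ case carefully, together with the fact that elements of $I$ correspond bijectively to residues modulo $m$. I would also check that the intervals from \cref{supp:lemma:sliding_window} actually lie within $V$, including the negative side. If the lemma's construction for $I_1$ is off near the left boundary, I would replace $i_0$ by the value of $x$ clamped to $[-\lfloor N/2\rfloor,\lfloor N/2\rfloor-s-t]$ and then slide toward the window containing $0$.
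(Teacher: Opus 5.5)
Your argument is correct and follows the paper's route: residue coverage of $s$ modulo $s+t$, the ``$+s$, else $-t$'' walk, and chaining via \cref{supp:lemma:sliding_window}. Turning the walk at each window's own right endpoint $a+s+t-1$ is in fact what justifies the paper's final claim that every node of an arbitrary window is reachable inside it, because the paper's walk from $0$, turning at $\lfloor N/2\rfloor$, need not trace a contiguous block in its first $s+t$ steps (e.g.\ $N=100$, $s=2$, $t=3$ gives $0,2,4,6,8$).

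One caveat, shared with the paper: both proofs tacitly use $|V|=N$, but in fact $|V|=2\lfloor N/2\rfloor$. So for odd $N$ with $s+t=N$, no window of size $s+t$ fits in $V$, and your clamp range $[-\lfloor N/2\rfloor,\lfloor N/2\rfloor-s-t]$ is empty (in the other cases your clamp does correctly fix the paper's off-by-one at the right end of $I_1$). The repair is to delete the node $a+s+t-1$ from your $(s+t)$-cycle, which leaves a Hamiltonian path on the $s+t-1$ consecutive nodes $[a,a+s+t-2]$, and then slide windows of that size instead.
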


\begin{proof}
    By Bezout's identity\cite{Bezout1779theorie} since $s$ and $t$ are co-prime, for any integer $n$, there exist integers $\alpha, \beta$ such that
    \[
    \alpha s + \beta t = n.
    \]
    Therefore, any integer $n$ can be expressed as a linear combination of $s$ and $t$. In particular, we can take $\alpha$ hops of size $s$ and $\beta$ hops of size $t$ to reach $n$. However, this identity alone does not establish connectivity, as it does not guarantee that the hops land on valid nodes inside $V$.

     Next, we show that under the condition $s+t \leq N$, there exists a valid sequence of paths that remain entirely within $V$ and connect any node in $V$ to the reference node. Consider the propagation sequences $n_k$, where we start from node $n_0 = 0$ take $+s$ steps to the right until we reach the boundary $\lfloor {N}/{2}\rfloor$. If we reach the boundary, we take the step $-t$. We can write these sequences as 
    \begin{equation}
     n_k = \begin{cases}
         n_{k-1} + s \quad \text{ if } n_{k-1} + s \leq \lfloor {N}/{2}\rfloor \\
         n_{k-1} - t \quad \text{ otherwise }.\\
     \end{cases}
    \end{equation}
    Notice that the iteration can proceed for arbitrarily many $k$ without leaving $V$, i.e., $n_k \in V$ for all $k$. This is because if $n_{k-1} + s > \frac{N}{2}$, then we must have $n_{k-1} - t > -\frac{N}{2}$. This follows from
    \[
    \text{if } n_{k-1} + s > \frac{N}{2} \implies n_{k-1} - t > \frac{N}{2} - s - t
     \ge 
    \frac{N}{2} - N
     \ge 
    -\frac{N}{2}.
    \]
    
    We can see that $n_k$ and $n_0 + ks$ have the same remainder when divided by $s+t$. Formally, $n_k \equiv n_0 + ks \pmod{s+t}$.  
    At each step, the update is either
    \[
    n_{k+1}=n_k+s
    \quad\text{or}\quad
    n_{k+1}=n_k-t.
    \]
    In both cases,
    $
    n_{k+1}\equiv n_k+s \pmod{s+t},
    $
    since $-t \equiv s \pmod{s+t}$. Therefore, 
    $
    n_k \equiv n_0 + ks  \pmod{s+t}.
    $
    
    Let us now consider an interval of length $s+t$ starting from node $0$. 
    Since $\gcd(s,t)=1$, it follows that $\gcd(s,s+t)=1$. 
    By~\cref{supp:lemma:co-prime_cover_all_residues}, the sequence $n_0 + ks$ for $k=0,\dots,s+t-1$ attains all residues $\{0,1,\dots,s+t-1\} \mod s+t$. Since the hop sequence $n_k$ is congruent with $n_0 + ks \pmod{s+t}$, the sequence $n_k$ connects all nodes starting from $n_0$ up to $n_{s+t-1}$ (\ie the sequence forms a contiguous interval of length $s+t$.).
    
    Finally, it remains to show that for any $n \in V$, there exists a valid path from the reference node to $n$ using $+s$ and $-t$ hops. By~\cref{supp:lemma:sliding_window}, for any $n \in V$ there exists a sequence of intervals $I_1, \dots, I_K$ each of size $s+t$ and contained in $V$, with $n \in I_1$, consecutive intervals sharing a common node, and $0 \in I_K$. Since every node within each interval is mutually reachable via $+s$ and $-t$ steps, the shared nodes between consecutive intervals provide a valid path from $n$ to $0$.%
\end{proof}

\subsection{Optimal co-prime shift design}

In this section, we provide the complete proof of our theorem for the optimal shift pairs. We first establish a lower bound on the maximum hop distance $h^\star$ that is necessary for full connectivity by any pair of shifts (see~\cref{supp:lemma:min_hops}). This lemma counts the maximum number of nodes that can be reached at each hop distance. Note that each node has four possible neighbors with two shifts (i.e., separated by $\pm s$ and $\pm t$). By counting the number of nodes reachable within $h$ hops, we obtain an upper bound on the number of distinct nodes that can be visited after $h$ hops (or steps). 
The node count in turn can help us determine how many hops are required to cover all $N$ nodes. 
Since the node count ignores the fact that multiple paths may lead to the same node, the count gives us a lower bound on the number of hops required for full connectivity. In summary, the count represents the maximum possible number of reachable nodes and provides a universal lower bound on the required hop length.

\begin{lemma}[Lower Bound on the Maximum Hop Distance]
\label{supp:lemma:min_hops}
    Consider the graph $V$ with $N$ nodes defined in~\cref{supp:eq:graph_nodes} and edges $E(\{s,t\})$ as in~\cref{supp:eq:graph_edges}. Let $h^\star$ denote the minimum number of hops required to reach all nodes from a reference node. Then, for any choice of two shifts,
    \begin{equation}
    \label{supp:eq:min_hop_bound}
    h^\star \ge \left\lceil \tfrac{-1 + \sqrt{2N - 1}}{2} \right\rceil.
    \end{equation}
\end{lemma}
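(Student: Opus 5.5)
The plan is a counting argument. Every node reachable within $h$ hops has the form $\alpha s + \beta t$ with $|\alpha|+|\beta| \le h$, so the number of reachable nodes is at most the number of integer points in an $\ell_1$-ball of radius $h$. Requiring this count to be at least $N$ gives the bound.

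\textbf{Step 1 (path encoding).} Fix the reference node $r$ and any node $n$ whose hop distance from $r$ is at most $h$. Take a shortest path $r = v_0, v_1, \dots, v_m = n$ with $m \le h$. By \cref{supp:eq:graph_edges}, each hop changes the index by one of $\pm s, \pm t$. Let $a_+, a_-, b_+, b_-$ count the hops of each type $+s, -s, +t, -t$. Addition is commutative, so
\[
n - r = \alpha s + \beta t, \qquad \alpha = a_+ - a_-,\quad \beta = b_+ - b_-,
\]
and $|\alpha| + |\beta| \le a_+ + a_- + b_+ + b_- = m \le h$. The constraint that intermediate nodes stay in $V$ only removes paths, so it can be ignored for an upper bound on the count.

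\textbf{Step 2 (counting).} The map $(\alpha,\beta) \mapsto r + \alpha s + \beta t$ is therefore a surjection from
\[
B_h = \{(\alpha,\beta)\in\mathbb{Z}^2 : |\alpha|+|\beta|\le h\}
\]
onto the set of nodes within $h$ hops. It may fail to be injective, since different lattice points can give the same node, but that only strengthens the inequality. The standard count of $B_h$ gives $1 + \sum_{k=1}^{h} 4k = 2h^2 + 2h + 1$ points: one point at radius $0$ and $4k$ points at radius $k$.

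\textbf{Step 3 (conclusion).} If all $N$ nodes are reached within $h^\star$ hops, then $2(h^\star)^2 + 2h^\star + 1 \ge N$. The positive root of $2h^2 + 2h + 1 - N = 0$ is $h = \frac{-1+\sqrt{2N-1}}{2}$. Since the quadratic is increasing for $h \ge 0$, we get $h^\star \ge \frac{-1+\sqrt{2N-1}}{2}$. Because $h^\star$ is an integer, the ceiling follows.

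None of these steps is difficult. The only point needing care is Step 1: we must state explicitly that only the net counts $\alpha, \beta$ matter, and that discarding the boundary constraint and collisions between distinct $(\alpha,\beta)$ can only overcount. This is what makes the bound valid for any choice of $s,t$ and any reference node.
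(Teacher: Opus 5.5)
Your proposal is correct and follows the paper's proof essentially step for step. You write each node within $h$ hops as $\alpha s + \beta t$ with $|\alpha|+|\beta|\le h$, bound the count by the $2h^2+2h+1$ lattice points of the $\ell_1$-diamond, and solve $2h^2+2h+1\ge N$. Your Step 1 makes explicit the net-count reduction that the paper only asserts. One small wording fix: the map from $B_h$ is not literally a surjection onto the reachable set, since its image can include points outside $V$ or unreachable within $h$ hops; the reachable set is contained in that image, which gives the same cardinality bound.
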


\begin{proof}  If we fix a hop budget $h$, any node reachable from the reference within $h$ hops must be 
written as
\begin{equation}
    \label{supp:eq:reachable_nodes}
    x = \alpha s + \beta t,
    \qquad
    |\alpha| + |\beta| \le h,
\end{equation}
which means we take $\alpha$ hops using shift $s$ and $\beta$ hops using shift $t$. Therefore, the set of nodes reachable within $h$ hops is
\[
\mathcal{S}_h = 
\{ \alpha s + \beta t : |\alpha| + |\beta| \le h \}.
\] 
The maximum possible size of this set can be 
\[
1 + 4 \sum_{i=1}^{h} i
=
2h^2 + 2h + 1,
\] which is found by counting the integer points in the diamond $|\alpha| + |\beta| = i, \text { for } i = 1, \dots , h$. Since different pairs $(\alpha,\beta)$ may produce the same node $n$, the number of 
distinct nodes reachable within $h$ hops satisfies
$
|\mathcal{S}_h|
\le
2h^2 + 2h + 1.
$
In order to reach all $N$ nodes in the interval, it is therefore necessary that
\[
2h^2 + 2h + 1 \ge N.
\]
Solving this quadratic inequality gives
\[
h \ge \frac{-1 + \sqrt{2N - 1}}{2} .
\] Since $h$ is an integer for any $N>1$, we obtain the universal lower bound
\[ 
{h}^\star = \left\lceil
\frac{-1 + \sqrt{2N - 1}}{2} 
\right\rceil.
\] 
This result implies that, for any pair of two shifts $s$ and $t$, we will not  be able to connect all the nodes in $V$ using less than ${h}^\star$ hops (or steps). %
\end{proof}

Next, we prove~\cref{supp:theorem:optimal_coprime_pair}, which shows that the co-prime shifts 
$s^\star = \lfloor \sqrt{N/2} \rfloor$ and 
$t^\star = s^\star  + 1$ 
achieve this bound. 
We first show that using the lower bound $h^\star$ hops, the nodes visited by the optimal shifts form a continuous interval. 
Finally, we prove that the endpoints of this interval exceed $\left\lfloor N/2 \right\rfloor$ in magnitude, which implies that the entire node set $V$ lies within the reachable region. Note that this theorem does not claim that the provided optimal pair is unique. Experimentally, we observe that some other shift pairs with values close to $\sqrt{N/2}$ also achieve the same bound.

\begin{theorem}
    \label{supp:theorem:optimal_coprime_pair}
    For a graph with $N$ nodes as in~\cref{supp:eq:graph_nodes} and edges defined by two shifts $s,t\in\mathbb{N}$, 
    $s = \left\lfloor \sqrt{\frac{N}{2}} \right\rfloor$ and 
    $t = \left\lfloor \sqrt{\frac{N}{2}} \right\rfloor + 1$
    achieve the minimum possible hop distance $h^\star$ given in~\cref{supp:lemma:min_hops}.
\end{theorem}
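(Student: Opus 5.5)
The plan is to exploit $t=s+1$, so that $t-s=1$, and compute, for each node $x\in V$, an explicit pair $(\alpha,\beta)$ with $x=\alpha s+\beta t$ and $|\alpha|+|\beta|\le h^\star$. Combined with \cref{supp:lemma:min_hops}, which shows no pair of shifts can do better, this gives the theorem. Connectivity itself is already guaranteed by \cref{supp:theorem:connectivity}, since $\gcd(s,s+1)=1$ and $2s+1\le N$. The real content is the hop count, together with realizing the hops as a path that stays inside $V$.

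\textbf{Step 1 (parametrize representations).} Since $x=\alpha s+\beta(s+1)$ forces $\beta\equiv x \pmod s$, I write $|x|=qs+r$ with $0\le r<s$; by the symmetry $x\mapsto -x$ it suffices to treat $x\ge 0$. The two natural candidates are:
\begin{itemize}
\item $\beta=r$, $\alpha=q-r$, with cost $|q-r|+r$;
\item $\beta=r-s$, $\alpha=q-r+s+1$, with cost $|q-r+s+1|+(s-r)$.
\end{itemize}
If $r\le q$ the first choice costs exactly $q$. If $r>q$ the two costs are $2r-q$ and $q+2s+1-2r$. These sum to $2s+1$, so their minimum is at most $s$.

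\textbf{Step 2 (reduce to two inequalities).} By Step 1, every $x$ with $|x|\le \lfloor N/2\rfloor$ needs at most $\max(q_{\max},s)$ hops, where $q_{\max}=\lfloor \lfloor N/2\rfloor/s\rfloor$. It then remains to verify $s\le h^\star$ and $q_{\max}\le h^\star$ for $s=\lfloor\sqrt{N/2}\rfloor$. Both quantities are about $\sqrt{N/2}$, while $h^\star=\lceil(\sqrt{2N-1}-1)/2\rceil\approx\sqrt{N/2}-\tfrac12$.

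\textbf{Step 3 (the main obstacle).} This is the integer bookkeeping, and the margins are only about $\tfrac12$. My first attempt is to write $N=2s^2+2u+\delta$ with $0\le u\le 2s$ (valid since $s^2\le N/2<(s+1)^2$) and $\delta\in\{0,1\}$. Then compare $2h^2+2h+1$ against $N$ at $h=s-1$ and $h=s$ to pin down $h^\star$ in each sub-range of $u$, and check $q_{\max}=s+\lfloor u/s\rfloor\le h^\star$ case by case. If the crude bound $\max(q_{\max},s)$ overshoots by one near the boundary, I would refine Step 1. The refinement is to note that the worst cases (large $q$ together with $r>q$, or $r\approx s/2$ when $q<r$) cannot occur simultaneously with $x$ near $\pm N/2$, and to use the sharper cost $\min(q,\ldots)$ there. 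Failing that, I would count the exact reachable interval $\{\alpha s+\beta t:|\alpha|+|\beta|\le h\}$, which for $t=s+1$ is a union of intervals $[ms-\cdot,\,mt]$, and show it is contiguous and covers $[-\lfloor N/2\rfloor,\lfloor N/2\rfloor-1]$ at $h=h^\star$.

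\textbf{Step 4 (path feasibility).} I must check that the hops can be ordered so that every intermediate node lies in $V$. When $\alpha,\beta$ have the same sign, I take the steps monotonically toward $x$. When they have opposite signs, I interleave them greedily: step toward $x$ with the forward shift while this does not overshoot, otherwise use the backward shift. This is the same trick as in \cref{supp:theorem:connectivity}, and it keeps all partial sums within distance $t$ of the segment $[0,x]$, hence inside $V$ because $2s+1\le N$. Near the right boundary $\lfloor N/2\rfloor-1$ I may need to swap the roles of $+$ and $-$ to avoid stepping past the end of $V$.
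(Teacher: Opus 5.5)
Your Step 1 representations are valid, but the reduction in Step 2 is false: $q_{\max}\le h^\star$ does not hold in general. The refinement you sketch in Step 3 also looks in the wrong place.

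\textbf{Where Step 2 fails.} Use your own parametrization $N=2s^2+2u+\delta$. Then $\lfloor N/2\rfloor=s^2+u$ and $q_{\max}=s+\lfloor u/s\rfloor$, while $h^\star=s$ for $u\le s$ and $h^\star=s+1$ for $u\ge s+1$. So the bound $\max(q_{\max},s)$ exceeds $h^\star$ exactly when $u=s$ or $u=2s$.

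The smallest instance is $N=12$. Here $s=2$, $t=3$ and $h^\star=2$. The node $-6\in V$ has $q=3$ and $r=0$, so your two candidates cost $3$ and $8$, whereas $-6=-2t$ takes two hops.

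In both bad cases the offending node is $x=-\lfloor N/2\rfloor$, which equals $-st$ or $-(st+s)$. It has $r=0\le q$, so it lies in the branch you treated as harmless. Arguing that the $r>q$ worst cases stay away from $\pm N/2$ therefore cannot close the gap.

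\textbf{The missing idea.} Nodes near $\pm N/2$ must be reached mostly with hops of the \emph{larger} shift, at cost about $|x|/t$ rather than $|x|/s$. Your candidates $\beta\in\{r,r-s\}$ are only a window of the full family $\beta=r+js$, $\alpha=q-r-j(s+1)$, $j\in\mathbb{Z}$. When $j\ge 1$ and $q-r\ge j(s+1)$, the cost drops to $q-j$. This is because $(s+1)s=st$, so you can trade $s+1$ hops of size $s$ for $s$ hops of size $t$.

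Taking $j=1$ handles the two exceptional nodes: $-st$ in $s$ hops and $-(st+s)$ in $s+1$ hops. With that patch your case analysis goes through. Every other node either has $q\le h^\star$, or lies in the branch $r>q$ where your cost is at most $s\le h^\star$.

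\textbf{The fallback route.} The fallback you mention at the end of Step 3 is exactly the paper's proof. Set $u=\alpha+\beta$ and $v=\beta$. The $h$-hop set is then the union over $|u|\le h$ of the integer blocks $[us+\lceil (u-h)/2\rceil,\ us+\lfloor (u+h)/2\rfloor]$. Consecutive blocks are contiguous once $h\ge s$, so the union is all of $[-ht,ht]$. One finishes by checking $h^\star\ge s$ and $h^\star t\ge\lfloor N/2\rfloor$ in the two cases $h^\star\in\{s,s+1\}$.

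\textbf{Step 4.} This is a genuine addition, since the paper only counts algebraic combinations $\alpha s+\beta t$ and never checks that the hops can be ordered inside $V$. Your greedy ``never overshoot $x$'' ordering does work. For $x\ge 0$ it keeps every partial sum in $[\min(0,x-2s),\,x]$, and symmetrically for $x<0$. This lies in $V$ because $\lfloor N/2\rfloor\ge s^2\ge 2s$ for $s\ge 2$; the case $s=1$ is a direct check.
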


\begin{proof}
    
Let us define consecutive co-prime shifts as $s = k$ and $t = k+1$ for a $k \in \mathbb{N}$. Since consecutive integers satisfy $\gcd(s,t)=1$, this choice of $s$ and $t$ will give us a co-prime set and any integer reachable within hop budget $h$ can be written as
\begin{align}
     \label{supp:eq:reachable_tho2}
    n &= \alpha k + \beta (k+1),
    \qquad |\alpha|+|\beta|\le h \\
    &= (\alpha + \beta )k + \beta. \notag
\end{align}
Let us define an anchor point $u=\alpha + \beta$ and offset $v=\beta$ and rewrite~\cref{supp:eq:reachable_tho2} as
\begin{equation}
     n = uk + v. \\
\end{equation}
We can view the propagation as making $u$ hops of $k$ units and visiting $v$ consecutive nodes. For a fixed budget $h$, the constraint on $u$ and $v$ can be written as
\[
|u-v| + |v| \le h.
\] By the triangle inequality, we have $|u| = |\alpha + \beta| \leq |\alpha| + |\beta| \leq h$.  Similarly for a fixed $u$, the
corresponding values of $v$ must satisfy $u-h \leq 2v \leq u + h  $, which can be written as 
\[
\left\lceil \frac{u-h}{2} \right\rceil
\le
v
\le
\left\lfloor \frac{u+h}{2} \right\rfloor.
\]
Therefore, the integers reachable within $h$ hops can be written as
\begin{equation}
n = uk + v,
\qquad
|u|\le h,
\quad
\left\lceil \frac{u-h}{2} \right\rceil
\le
v
\le
\left\lfloor \frac{u+h}{2} \right\rfloor.
\end{equation}
For each fixed $u$, the range of possible values of $v$ consists of consecutive integers between 
$\lceil \frac{u-h}{2} \rceil$ and $\lfloor \frac{u+h}{2} \rfloor$. Thus, we can define the interval of nodes visited by a given $u$ and $k$ as
\[
I_{u,k} = \left[\, uk + \left\lceil \frac{u-h}{2} \right\rceil,\; uk + \left\lfloor \frac{u+h}{2} \right\rfloor \,\right].
\]
We can find that the difference between the end of the interval $I_{u,k}$ and the start of the interval $I_{u+1,k}$ (\ie, the gap between the intervals) is
\begin{equation}
    \label{supp:eq:interval_gap}
\min(I_{u+1, k}) - \max(I_{u,k}) \le k-h.
\end{equation}
Therefore, if $h \ge k$, the intervals $I_{u,k}$ and $I_{u+1,k}$ are either overlapping or contiguous. 

The interval $I_{u,k}$ will have a maximum value of $h(k+1)$ for $u=h$ and a minimum value of $-h(k+1)$ for $u=-h$. Therefore, following~\cref{supp:eq:interval_gap} if $k\leq h$, the union of intervals for $|u|\le h$ is 
\[
\bigcup_{|u|\le h} I_{u,k} = [-h(k+1),\, h(k+1)].
\]

Let us now consider the choices: $k^\star=\left\lfloor \sqrt{\frac{N}{2}} \right\rfloor$, $s=k^\star$, $t=k^\star+1$, 
and $h=h^\star$ from~\cref{supp:lemma:min_hops}. 
First, notice that $h^\star \ge k^\star$ because 
\[
h^\star \geq
\frac{-1 + \sqrt{2N - 1}}{2}  \underset{(i)}{>}  \frac{-1 + \sqrt{2N} - 1}{2} = \sqrt{\frac{N}{2}} - 1,
\]
where $(i)$ comes from the fact that $\sqrt{2N-1} > \sqrt{2N}-1$. Finally, we have 
\begin{equation}
   \label{supp:eq:hstar_geq_kstar}
    h^\star  > \sqrt{\frac{N}{2}} - 1 \implies h^\star \geq \left\lfloor \sqrt{\frac{N}{2}} \right\rfloor = k^\star.
\end{equation}
Second, we need to show that $h^\star(k^\star+1)\ge \lfloor N/2 \rfloor$. The choice of $k=k^\star$ can map to any possible values of $N$ that are in the range $2(k^\star)^2 \leq N < 2(k^\star + 1)^2$. For these values, note that from~\cref{supp:eq:hstar_geq_kstar} $h^\star$ can take the values of $h^\star = k^\star$ or $h^\star = k^\star + 1$. In particular, we have the two cases written as
\[
    h^\star = 
    \begin{cases} 
      k^\star & \text{if }\quad 2(k^\star)^2 \leq N \leq 2(k^\star)^2 + 2k^\star + 1 \\
      k^\star + 1 & \text{if }\quad 2(k^\star)^2 + 2k^\star + 2 \leq N < 2(k^\star + 1)^2. 
    \end{cases}
\]
In both cases, we can verify that 
\begin{equation}
\label{supp:eq:hstar_kstar_cvg}    
h^\star(k^\star+1)\ge \left \lfloor \frac{N}{2} \right\rfloor. 
\end{equation}

We have shown that there are no gaps between the intervals since $h^\star \ge k^\star$, and that we can reach all $N$ nodes of $V$. Consequently, the union of intervals for $|u|\le h^\star$ covers all nodes of $V$. In other words,
\[
V = \left\{-\left\lfloor\frac{N}{2}\right\rfloor, \dots,
\left\lfloor\frac{N}{2}\right\rfloor - 1\right\}
\subset
\bigcup_{|u|\le h^\star} I_{u,k^\star} .
\]
Therefore, the shifts $s=k^\star$ and $t=k^\star+1$ achieve the minimum hop bound $h^\star$ and connect all nodes in $V$. 
\end{proof}

\begin{figure}[!t]
    \centering
    \includegraphics[width=0.9\linewidth]{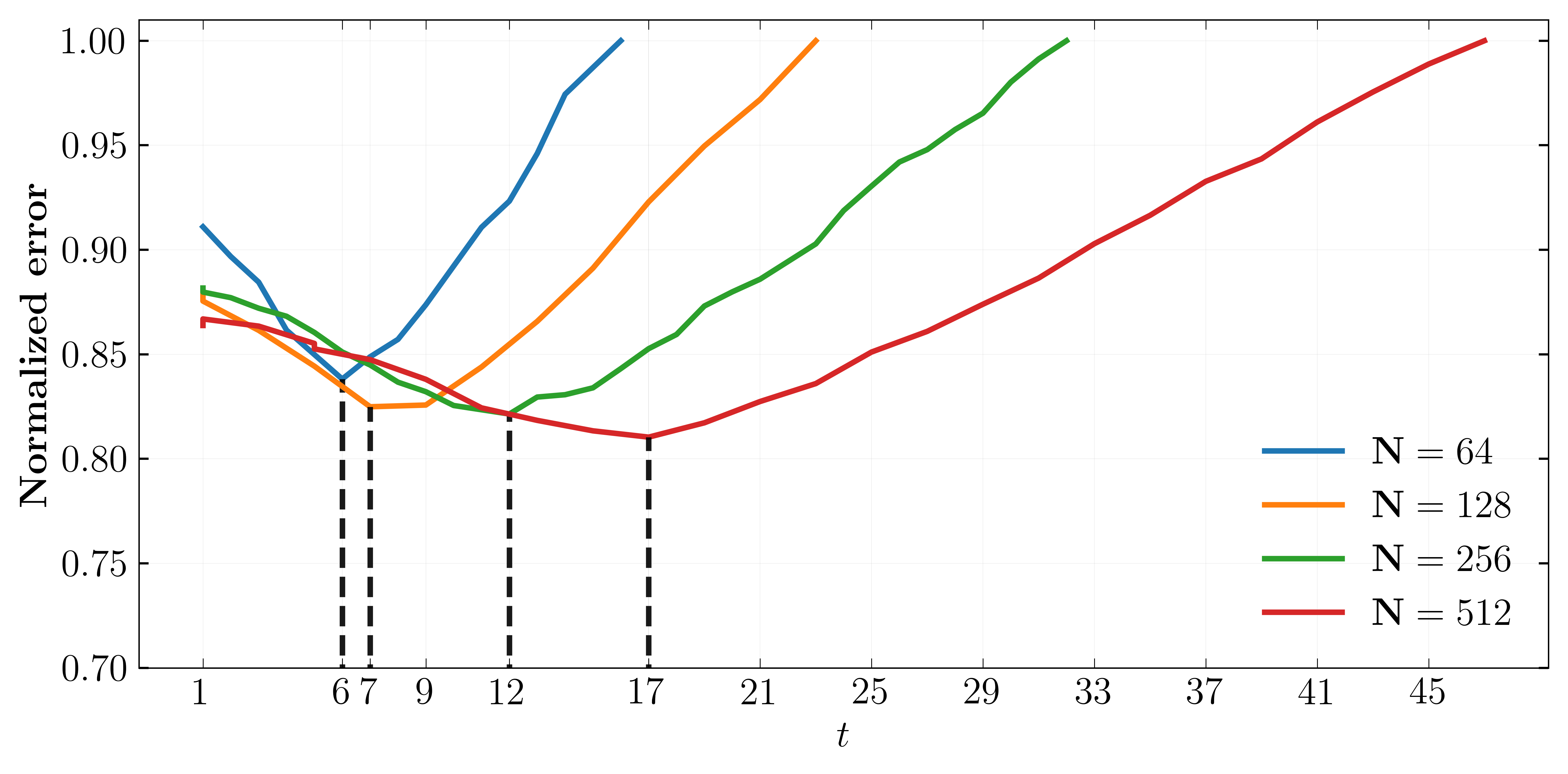}
    \caption{
    Normalized reconstruction error as a function of the shift $t$ for different sizes $N$. 
    For each curve, the first co-prime pair is fixed to $s=k^\star=\lfloor\sqrt{N/2}\rfloor$ and we sweep through all possible choices of $t$.
    Across all tested sizes, the minimum error occurs near $t=s+1$ (indicated using the dashed lines). This confirms the consecutive co-prime pairs $(s,t)=(k^\star,k^\star+1)$ yields the optimal propagation structure and minimizes noise accumulation.
    }
    \label{supp:fig:error_vs_shift_multiple_N}
\end{figure}

\section{Noise Analysis and Optimality}
\label{supp:sec:noise_analysis}

In this section, we provide additional empirical results that support the theoretical analysis. In particular, we investigate how the choice of shifts affects noise propagation and reconstruction error. The \textbf{hop distance} denotes the number of propagation steps required to reach a given node from the reference node. \\

\noindent\textbf{Optimal shift selection.}
To validate~\cref{supp:theorem:optimal_coprime_pair}, we fix the first shift to $s=k^\star$ and sweep the second shift $t$ across a range of values while measuring the reconstruction error. The results for several signal sizes are shown in \cref{supp:fig:error_vs_shift_multiple_N}. The minimum error consistently occurs near $t=s+1$, supporting the theoretical result that the consecutive co-prime pair $(s,t)=(k^\star,k^\star+1)$ achieves the lowest error. 

For the number of nodes $N$ shown in~\cref{supp:fig:error_vs_shift_multiple_N}, we obtain the exact optimal pairs predicted by the theory for most cases: $(s,t)=(5,6)$ for $N=64$, $(11,12)$ for $N=256$, and $(16,17)$ for $N=512$, which follow $k^\star=\lfloor\sqrt{N/2}\rfloor$. For $N=128$, the minimum occurs at $(7,8)$, which is very close to the predicted pair $(8,9)$. The recovery error observed in~\cref{supp:fig:error_vs_shift_multiple_N} is very close, as both shifts achieve similar hop lengths.

\noindent\textbf{Noise reduction with path averaging.}
We can reduce noise accumulation by averaging the phasor estimates obtained from multiple propagation paths. When several paths reach the same node, their estimates can be combined which reduces estimator variance and improves robustness to measurement noise.

\Cref{supp:fig:mae_vs_hop_noavg_vs_avg} compares the cumulative phase error with and without path averaging for several shift configurations. In all multi-shift cases, averaging leads to a clear reduction in reconstruction error across hop distances. This behavior is expected since averaging combines multiple independent estimates of the phase difference and therefore suppresses noise. For the single-shift case there is no improvement since the phase at each node is obtained from only one propagation path. 

\begin{figure}[t]
    \centering
    \includegraphics[width=\linewidth]{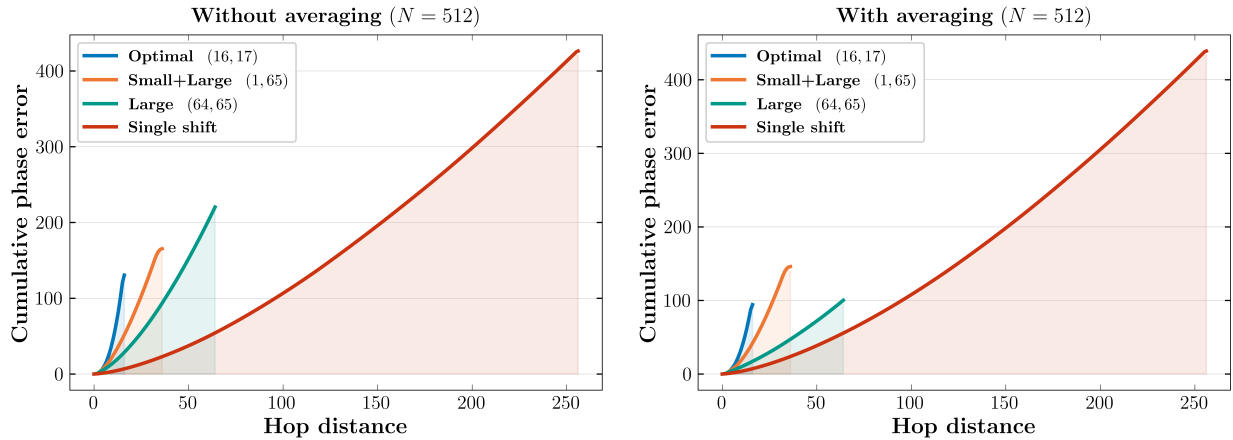}
    \caption{Cumulative phase error as a function of hop distance with and without path averaging for several shift configurations at $N=512$. Averaging reduces noise accumulation when multiple propagation paths exist because estimates from different paths can be combined. 
    }
    \label{supp:fig:mae_vs_hop_noavg_vs_avg}
\end{figure}

\section{Additional Algorithm Implementation Details}
\label{supp:sec:algo_details}

Our main algorithm (Algorithm 1 of the main paper) propagates phase estimates using a breadth–first
search (BFS) starting from a reference pixel. In this formulation, a queue
stores pixels whose phase has already been estimated. Each time a pixel is
dequeued, its neighbors defined by the shifts $\{\Delta_k\}$
are visited. The phases of these neighbors are computed by multiplying the current
phasor estimate with the measured phase differences. 

This queue-based implementation is inherently
sequential and therefore inefficient on GPUs. To address this, we reformulate
the propagation as a parallel wavefront update over the entire signal.
Instead of processing one pixel at a time, we maintain a binary mask that
identifies the currently active pixels whose phases have already been
estimated. At each iteration, all active pixels simultaneously propagate
their phase estimates to neighboring pixels defined by the shifts
$\{\Delta_k\}$. The propagation is implemented using tensor slices that align source and
destination pixels. For each shift $s_k$, we update
\begin{equation}
\widehat{\bmp}_{i+s_k,j}^{(t+1)} \mathrel{+}= 
\widehat{\bmp}_{i,j}^{(t)}\, \bmp^{(v)}_k(i+s_k,j),
\end{equation}
for all active pixels $(i,j)$ simultaneously. Despite propagating from all visited pixels, hop distances are assigned only on first visit. This ensures that each pixel records its shortest-path hop count. After propagation, pixels reached for the first time are assigned their hop distance, and phasors arriving with the same hop length are
averaged. The algorithm updates the active pixels and repeats the same process until all nodes are visited. This parallel formulation preserves the behavior of the BFS
propagation while allowing all updates to be computed efficiently using
batched tensor operations.

\section{Additional Real Experiments and Hardware Details}
\label{supp:sec:hardware}
We designed a hardware prototype to realize a self-referenced PSI model and to validate our method on real experiments. The optical components used in our setup are summarized in~\cref{supp:tab:hardware}. We use a collimated $532$ nm laser diode to illuminate the target object. We built a $4f$ system using two lenses each with focal length $f = 100$ mm. A Holoeye LUNA phase-only SLM ($1920\times1080$, $4.5\,\mu$m pitch) is placed at the Fourier plane of the input. A non-polarizing beam splitter between the first lens and the SLM routes both the unmodulated and modulated reflected wavefronts through the second lens to a camera sensor ($1920\times1200$, $3.45\,\mu$m pitch) located one focal length away from the second lens. Linear polarizers placed before and after the beam splitter ensure the correct polarization state for phase modulation at the SLM. Phase ramps and global quadrature phase shifts displayed on the SLM produce calibrated spatially shifted interferometric measurements.

\begin{table}[t]
\centering
\caption{Optical hardware used in the real experiment setup.}
\label{supp:tab:hardware}
\resizebox{\linewidth}{!}{
\begin{tabular}{lccc}
\hline
\textbf{Component} & \textbf{Model} & \textbf{Quantity} & \textbf{Notes} \\
\hline
Laser diode (532 nm) & CPS532-C2 & 1 & Collimated illumination source \\
Spatial Light Modulator (SLM) & Holoeye LUNA VIS 111 & 1 & $1920\times1080$, $4.5\,\mu$m pitch \\
FLIR Camera & BFS-U3-23S3M-C & 1 & $1920\times1200$, $3.45\,\mu$m pitch \\
Non-polarizing Beam splitter & Thorlabs CCM1-BS013 & 1 & Routes reference and modulated beams \\
Lens & Thorlabs AC254-100-A & 2 & $f=100$ mm, $\varnothing 25.4$ mm \\
Diffuser & Thorlabs DG10-120-A & 1 & Glass diffuser \\
Aperture & ID12/M
 & 1 & Iris, 12 mm Max Aperture \\
Linear polarizers & - & 2 & Input/output polarization control \\
\hline
\end{tabular}
}
\end{table}

\begin{figure}[!ht]
    \centering
   \includegraphics[width=\linewidth]{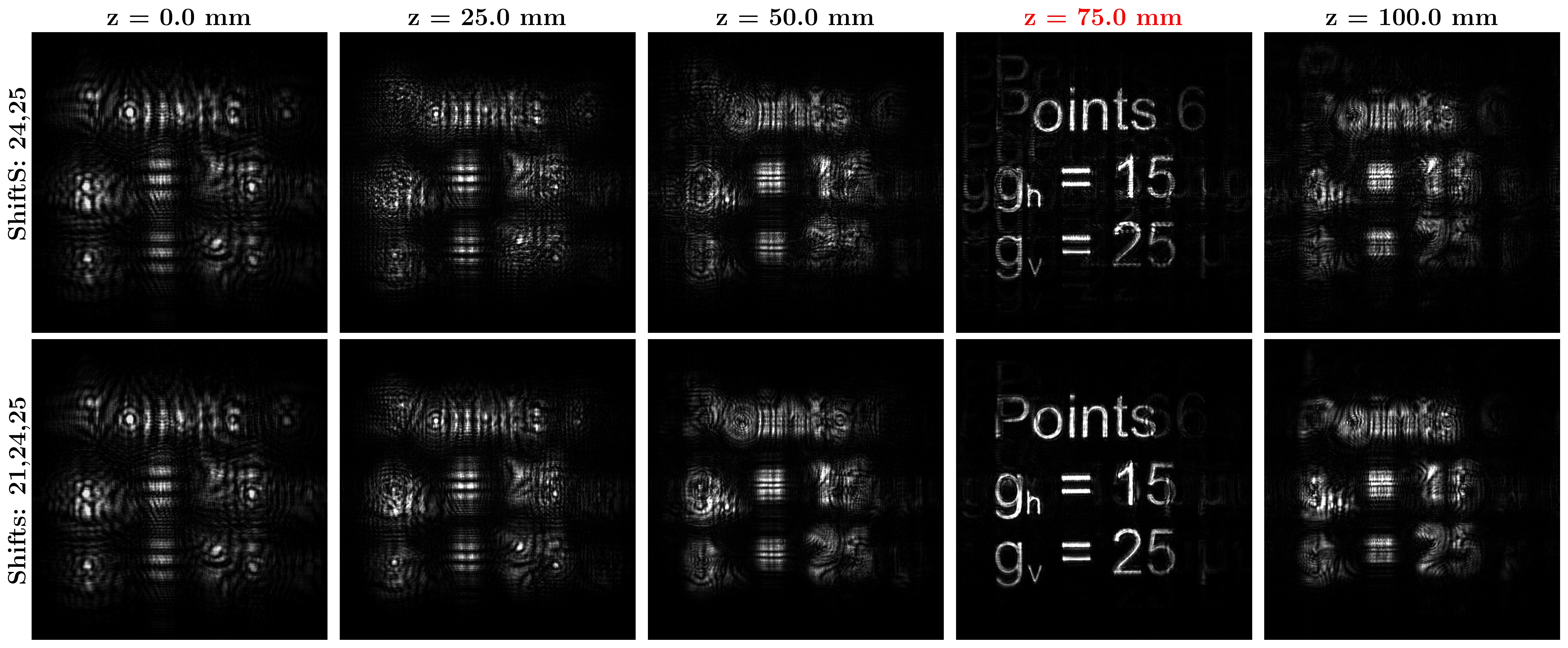}
    \rule{\linewidth}{0.4pt}
    \includegraphics[width=\textwidth]{figures/compressed/target_focusing_star_propagation_zoom_v2.jpg}
    \rule{\linewidth}{0.4pt}
    \includegraphics[width=\textwidth]{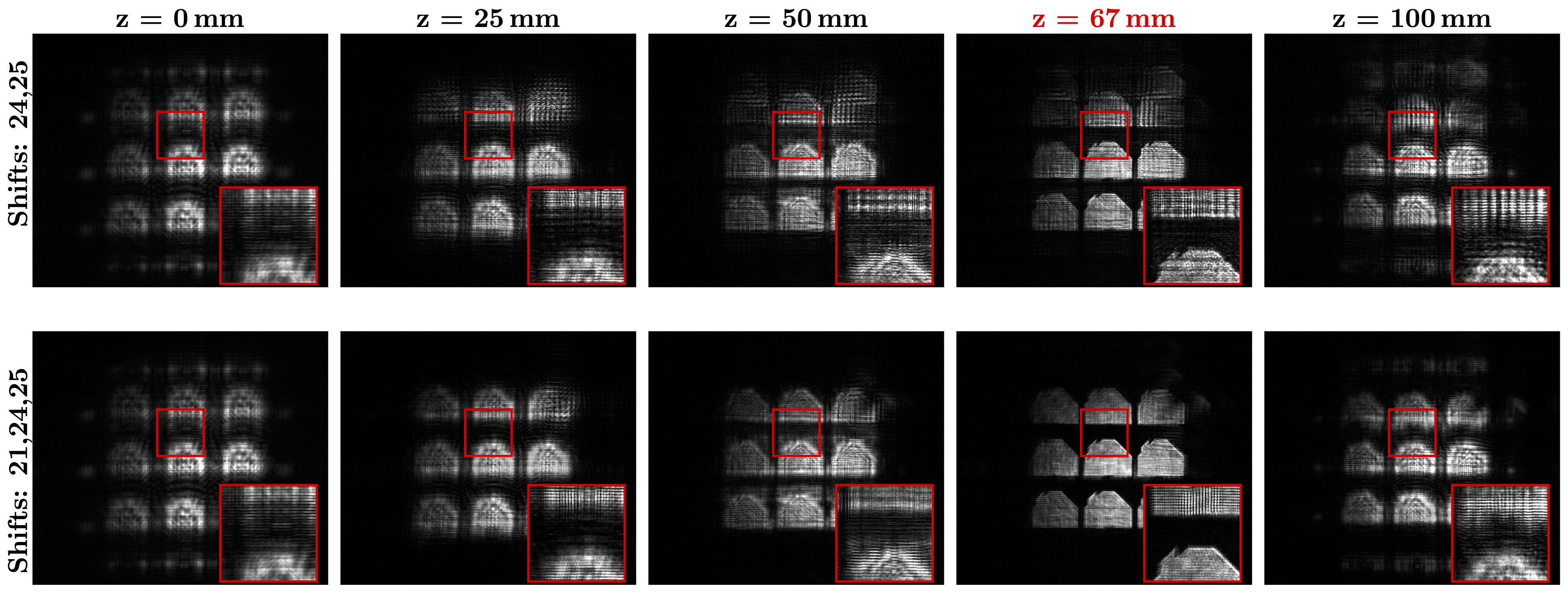}    
   \caption{Computational refocusing of the recovered wavefront for three different scenes using free-space propagation. The reconstructed complex field is propagated from $z=0$ mm to $z=100$ mm away from the object plane. A sharp focused image appears in the fourth column for each image. }
    \label{supp:fig:target_refocusing}
\end{figure}

\noindent
\textbf{Additional refocusing experiments.} In \cref{supp:fig:target_refocusing}, we show the computational refocusing results by propagating the  wavefront reconstructed from two sets of measurements: 16 measurements using shift pairs $(24,25)$, and 24 measurements using shifts $(21,24,25)$. 
The wavefront reconstructed at $z=0$mm is numerically propagated from $z=0$mm to $z=100$mm. We obtain a sharp focus around the correct target plane at $z = 75$mm or $z=67$mm for three different scenes. 
The zoomed-in views highlight that fine structures and edges in the scene are clearly resolved at the correct focus plane.

\noindent
\textbf{Comparison experiments.} We next present visual comparisons of refocusing results for different reconstruction methods in~\cref{supp:fig:focus_comp_digits}. Details about the comparison methods are presented in \cref{supp:sec:comp_details}. We consider the same two measurement settings: 16 measurements using shift pairs $(24,25)$, and 24 measurements using shifts $(21,24,25)$. In all cases, the targets are refocused using the phases recovered by each method. Qualitatively, our method yields the best reconstructions across all targets. In addition, applying least-squares refinement further improves the sharpness and quality of the recovered structures.

\section{Comparison Methods and Experimental Details}
\label{supp:sec:comp_details}
We conducted comparison experiments against the following methods: gradient descent–based complex field recovery with random initialization (\textbf{GD-Rand}) and spectral initialization~\cite{netrapalli2013phase} (\textbf{GD-Spec}), a \textbf{PnP-FISTA} plug-and-play recovery method using a pretrained \textbf{DRUNet} denoiser~\cite{zhang2021plug} as a prior, \textbf{WISH}~\cite{wu2019wish}, and a Deep Image Prior (\textbf{DIP})–based approach~\cite{ulyanov2018deep}. All experiments use multiple shifted measurements, where the field is interfered with spatially shifted copies along different directions and captured at multiple shift amounts. In all experiments, the measured amplitude that is captured with no shifts is used during initialization. 

\begin{figure}[!t]
    \centering
    \includegraphics[width=\linewidth]{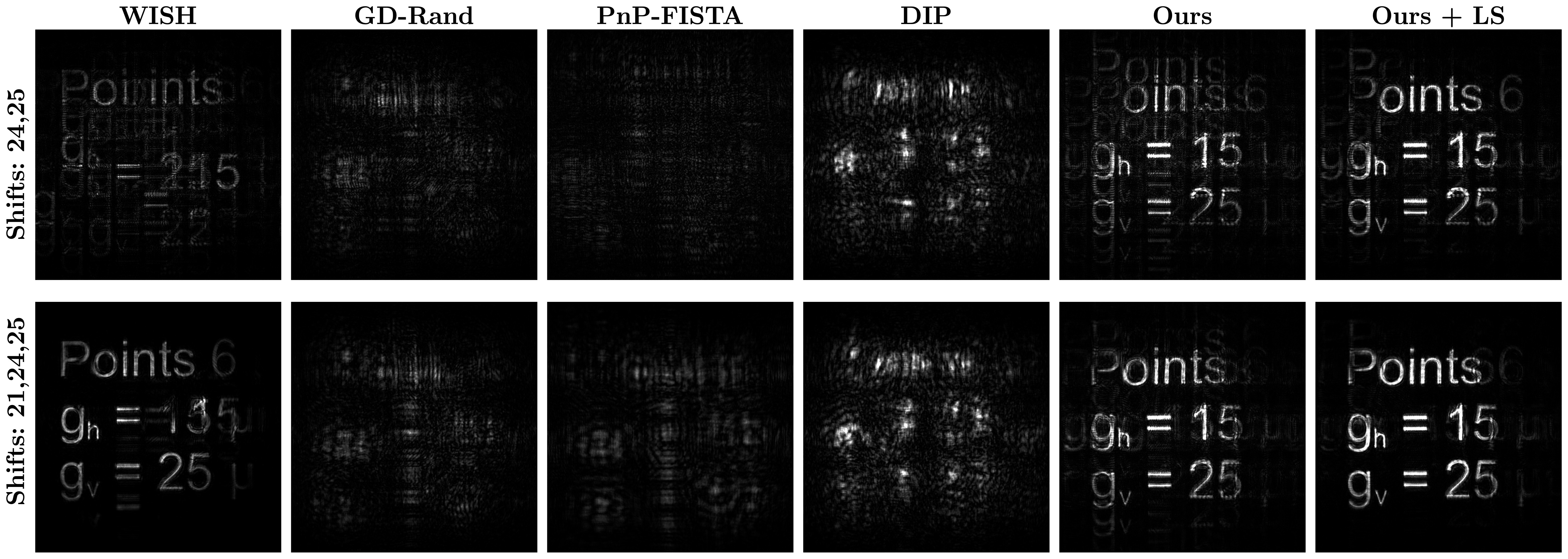}
    \rule{\linewidth}{0.4pt}
    \includegraphics[width=\linewidth]{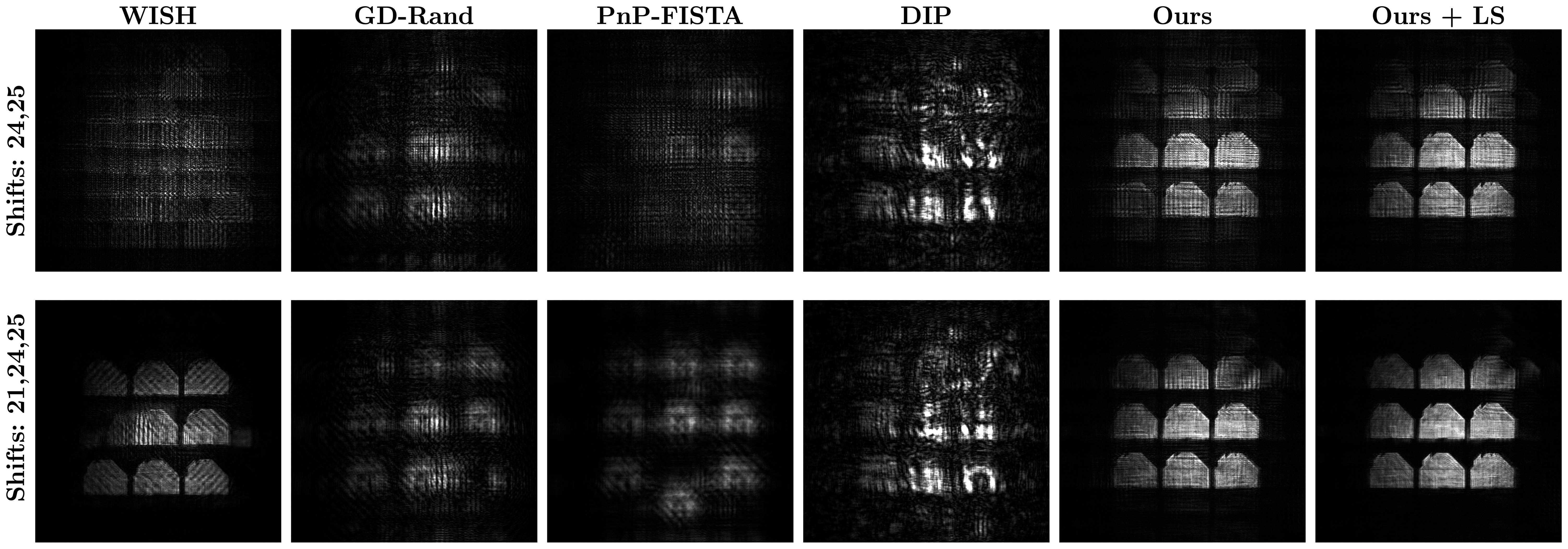}
   \rule{\linewidth}{0.4pt}
    \includegraphics[width=\linewidth]{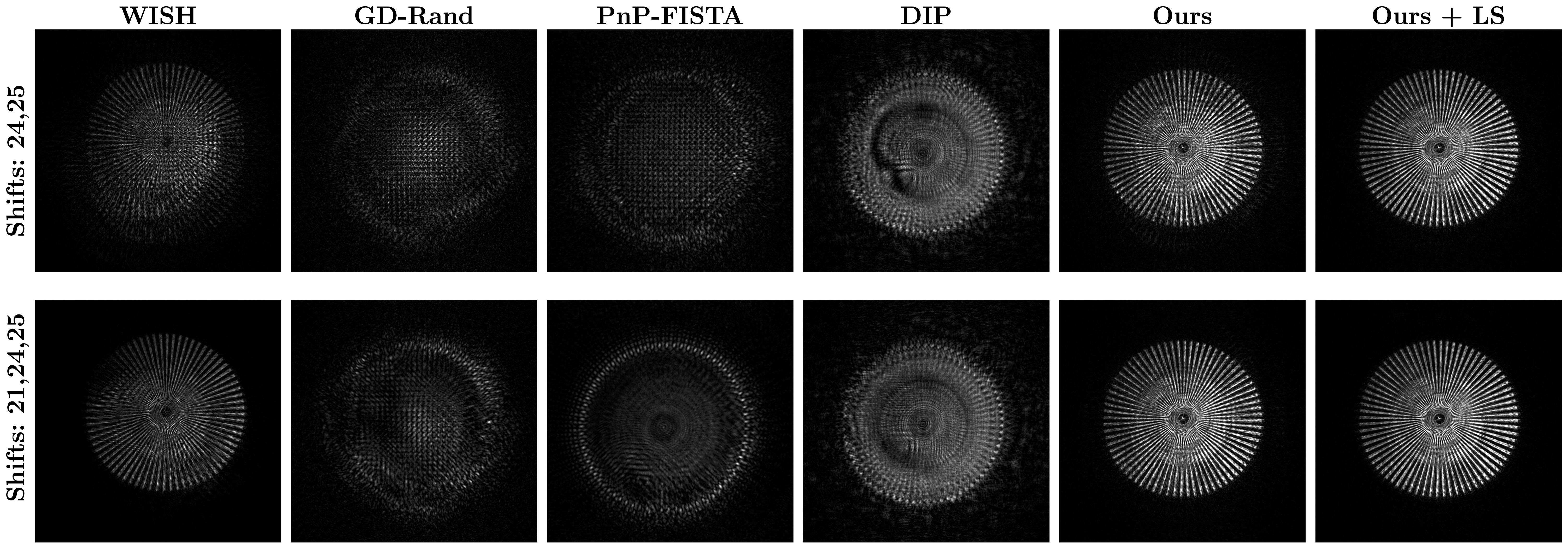}
    \caption{Comparison of refocused reconstructions for three  targets using different methods.}
    \label{supp:fig:focus_comp_digits}
\end{figure}

\begin{itemize}
      \item \textbf{Gradient-based methods.} We implement complex field recovery using gradient descent (GD) with stochastic gradient descent (SGD) in PyTorch, using a momentum of $0.9$ and running for 1000 iterations. The learning rate is set to $5\times 10^{-3}$ for all experiments. Two initialization strategies are used: random phase initialization (\textbf{GD-Rand}) and spectral initialization (\textbf{GD-Spec})~\cite{netrapalli2013phase}, where the initial field is computed using the adjoint operator from the DeepInverse library~\cite{tachella2025deepinverse}. The loss minimized is the mean-squared error between predicted and measured intensities.

       \item \textbf{PnP-FISTA.} We use the plug-and-play (PnP) framework for phase recovery using the DeepInverse library~\cite{tachella2025deepinverse}. The amplitude is initialized from the measured magnitude, and the phase is initialized randomly. A pretrained \textbf{DRUNet} denoiser~\cite{zhang2021plug} is used as the prior for the PnP framework. FISTA iterations run for 500 steps, with a stepsize of $8\times 10^{-2}$, and the denoiser strength $\sigma$ is set according to the measurement SNR.

     \item \textbf{WISH.} The complex field is recovered using the WISH algorithm~\cite{wu2019wish}. Each shifted interference measurement is modeled via a two-step propagation: a Fourier transform to the Fourier plane, phase modulation to simulate the shift, and an inverse Fourier transform, corresponding to a $4f$ optical system. The field at the Fourier plane is initialized by back-propagating the measured amplitudes with a random phase for each shift. The recovery algorithm follows a Gerchberg–Saxton-style alternating minimization: the algorithm iteratively enforces amplitude constraints at the measurement plane and updates the Fourier-plane estimate using a least-squares aggregation across all measurements. The process is repeated for 1000 iterations.
     
    \item \textbf{DIP.} We use the Deep Image Prior (DIP) framework~\cite{ulyanov2018deep} to recover the phase of the complex field. The phase is parameterized using a convolutional decoder network~\cite{darestani2021accelerated} that maps a fixed random latent input of size $16\times16$ to the output phase, while the amplitude is fixed to the measured amplitude. The network is optimized using Adam~\cite{kingma2014adam} with an initial learning rate of $10^{-3}$ for 1000 iterations, minimizing the mean-squared error between the predicted and measured intensities over all shifted measurements.

\end{itemize}

\section{Additional Simulation Experiments}
\label{supp:sec:sim_exp}

We report additional simulations to compare methods with different noise levels and numbers of measurements. To generate simulated data, we use 100 images from the DIV2K validation dataset~\cite{agustsson2017ntire} as ground truth intensity patterns. All images are center cropped to size $512 \times 512$. For each image, we synthesize complex wavefronts by combining the intensity with three types of phase profiles, including random phase, quadratic phase, and smooth phase profiles\footnote{The MATLAB \texttt{peaks} function is used to generate smooth synthetic profiles.}. We simulate shifted interferometric measurements with Poisson and Gaussian noise at multiple SNR levels and multiple shifts. We use the following shift configurations in the simulated experiments:

\begin{itemize}    \setlength\itemsep{0.4em} 

    \item 8 measurements: $\{\Delta_1\} = \{1\}$
    \item 16 measurements: $\{\Delta_1, \Delta_2\} = \{16,\,17\}$
    \item 32 measurements: $\{\Delta_i\}_{i=1}^{4} = \{16,\,17,\,23,\,31\}$
    \item 64 measurements: $\{\Delta_i\}_{i=1}^{8} = \{16,\,17,\,21,\,33,\,11,\,22,\,13,\,63\}$
\end{itemize}
Note that $(16,17)$ correspond to optimal shifts, whereas other shifts are selected at random.

Phase errors for quadratic phase profiles using $8,16,32,64$ measurements are reported in~\cref{supp:tab:phase_error_quad_snr_22,supp:tab:phase_error_quad_snr_13}, with visual reconstructions shown in \cref{supp:fig:quad_snr22,supp:fig:quad_snr13}. Across all measurement and noise settings our method performs the best. When using only $8$ measurements, our method significantly outperforms the competing approaches, especially at $22$ dB SNR. In addition, the least-squares refinement further improves the results in the $13$ dB SNR setting.

A similar pattern is observed for peak phase profiles (\cref{supp:tab:phase_error_peak_snr_22,supp:tab:phase_error_peak_snr_13}; \cref{supp:fig:peak_snr22,supp:fig:peak_snr13}). For random phase profiles (\cref{supp:tab:phase_error_rand_snr_22,supp:tab:phase_error_rand_snr_13}; \cref{supp:fig:random_snr22,supp:fig:random_snr13}), gradient descent achieves the best performance, while our method with least-squares refinement provides the second best results.

\begin{table}[htbp]
\centering
\small
\setlength{\tabcolsep}{3pt}
\caption{Phase errors for \textbf{Quadratic} phase recovery (mean $\pm$ std) at \textbf{22\,dB SNR}. Our method achieves the best performance. (Best bold, second-best underlined).}
\label{supp:tab:phase_error_quad_snr_22}
\resizebox{0.8\textwidth}{!}{
\begin{tabular}{lcccc}
\toprule

\diagbox{Method}{\#Meas.} & 8 & 16 & 32 & 64 \\
\midrule
GD-Rand & 1.536 $\pm$ 0.024 & 0.695 $\pm$ 0.365 & 0.470 $\pm$ 0.434 & 0.253 $\pm$ 0.385 \\
GD-Spec & 1.479 $\pm$ 0.063 & 0.737 $\pm$ 0.400 & 0.400 $\pm$ 0.432 & 0.358 $\pm$ 0.412 \\
PnP-FISTA & 1.505 $\pm$ 0.036 & 0.669 $\pm$ 0.341 & 0.456 $\pm$ 0.304 & 0.342 $\pm$ 0.225 \\
WISH & 1.524 $\pm$ 0.022 & 0.852 $\pm$ 0.299 & 0.423 $\pm$ 0.366 & 0.302 $\pm$ 0.401 \\
DIP & 1.541 $\pm$ 0.024 & 1.486 $\pm$ 0.031 & 1.491 $\pm$ 0.051 & 1.425 $\pm$ 0.071 \\

\hdashline \noalign{\vskip 0.5ex}
Ours & \underline{0.699 $\pm$ 0.247} & \textbf{0.158 $\pm$ 0.104} & \textbf{0.099 $\pm$ 0.083} & \textbf{0.076 $\pm$ 0.079} \\
Ours + LS & \textbf{0.623 $\pm$ 0.277} & \underline{0.183 $\pm$ 0.082} & \underline{0.126 $\pm$ 0.065} & \underline{0.112 $\pm$ 0.071} \\
\bottomrule
\end{tabular}
}
\end{table}

\begin{table}[htbp]
\centering
\small
\setlength{\tabcolsep}{3pt}
\caption{Phase errors for \textbf{Quadratic} phase recovery (mean $\pm$ std) at \textbf{13\,dB SNR}. Our method achieves the best performance and least-square refinements improve performance. (Best bold, second-best underlined).}
\label{supp:tab:phase_error_quad_snr_13}
\resizebox{0.8\textwidth}{!}{
\begin{tabular}{lllll}
\toprule

\diagbox{Method}{\#Meas.} & 8 & 16 & 32 & 64 \\
\midrule
GD-Rand & 1.539 $\pm$ 0.010 & 0.729 $\pm$ 0.361 & 0.471 $\pm$ 0.422 & 0.266 $\pm$ 0.369 \\
GD-Spec & 1.486 $\pm$ 0.051 & 0.859 $\pm$ 0.417 & 0.384 $\pm$ 0.285 & 0.370 $\pm$ 0.312 \\
PnP-FISTA & 1.522 $\pm$ 0.027 & 0.936 $\pm$ 0.304 & 0.518 $\pm$ 0.305 & 0.398 $\pm$ 0.267 \\
WISH & 1.536 $\pm$ 0.017 & 1.012 $\pm$ 0.276 & 0.500 $\pm$ 0.360 & 0.377 $\pm$ 0.405 \\
DIP & 1.541 $\pm$ 0.022 & 1.485 $\pm$ 0.030 & 1.485 $\pm$ 0.053 & 1.415 $\pm$ 0.082 \\

\hdashline \noalign{\vskip 0.5ex}
Ours & \underline{1.342 $\pm$ 0.161} & \underline{0.577 $\pm$ 0.311} & \underline{0.350 $\pm$ 0.240} & \underline{0.254 $\pm$ 0.184} \\
Ours + LS & \textbf{1.251 $\pm$ 0.225} & \textbf{0.445 $\pm$ 0.287} & \textbf{0.274 $\pm$ 0.194} & \textbf{0.200 $\pm$ 0.124} \\

\bottomrule
\end{tabular}
}
\end{table}

\begin{table}[htbp]
\centering
\small
\setlength{\tabcolsep}{3pt}
\caption{Phase errors for \textbf{Peak} phase recovery (mean $\pm$ std) at \textbf{22\,dB SNR}. Our method achieves the best results across all settings. (Best bold, second-best underlined).}
\label{supp:tab:phase_error_peak_snr_22}
\resizebox{0.8\textwidth}{!}{
\begin{tabular}{lcccc}
\toprule

\diagbox{Method}{\#Meas.} & 8 & 16 & 32 & 64 \\
\midrule
GD-Rand & 1.478 $\pm$ 0.021 & 0.791 $\pm$ 0.361 & 0.570 $\pm$ 0.404 & 0.286 $\pm$ 0.363 \\
GD-Spec & 1.468 $\pm$ 0.075 & 0.686 $\pm$ 0.397 & 0.371 $\pm$ 0.374 & 0.209 $\pm$ 0.367 \\
PnP-FISTA & 1.506 $\pm$ 0.035 & 0.663 $\pm$ 0.341 & 0.479 $\pm$ 0.338 & 0.373 $\pm$ 0.296 \\
WISH & 1.517 $\pm$ 0.024 & 0.898 $\pm$ 0.330 & 0.425 $\pm$ 0.367 & 0.309 $\pm$ 0.391 \\
DIP & 1.471 $\pm$ 0.022 & 1.442 $\pm$ 0.063 & 1.423 $\pm$ 0.081 & 1.352 $\pm$ 0.112 \\

\hdashline \noalign{\vskip 0.5ex}
Ours & \underline{0.665 $\pm$ 0.238} & \textbf{0.155 $\pm$ 0.096} & \textbf{0.097 $\pm$ 0.071} & \textbf{0.071 $\pm$ 0.063} \\
Ours + LS & \textbf{0.559 $\pm$ 0.247} & \underline{0.205 $\pm$ 0.063} & \underline{0.159 $\pm$ 0.050} & \underline{0.124 $\pm$ 0.050} \\

\bottomrule
\end{tabular}
}
\vspace{-1em}
\end{table}

\begin{table}[htbp]
\centering
\small
\setlength{\tabcolsep}{3pt}
\caption{Phase errors for \textbf{Peak} phase recovery (mean $\pm$ std) at \textbf{13\,dB SNR}. Our method achieves the best results across all settings. (Best bold, second-best underlined).}
\label{supp:tab:phase_error_peak_snr_13}
\resizebox{0.8\textwidth}{!}{
\begin{tabular}{lcccc}
\toprule

\diagbox{Method}{\#Meas.} & 8 & 16 & 32 & 64 \\
\midrule
GD-Rand & 1.485 $\pm$ 0.012 & 0.830 $\pm$ 0.338 & 0.593 $\pm$ 0.401 & 0.314 $\pm$ 0.358 \\
GD-Spec & 1.482 $\pm$ 0.054 & 0.800 $\pm$ 0.381 & 0.415 $\pm$ 0.361 & \underline{0.216 $\pm$ 0.195} \\
PnP-FISTA & 1.525 $\pm$ 0.022 & 0.948 $\pm$ 0.316 & 0.535 $\pm$ 0.327 & 0.377 $\pm$ 0.279 \\
WISH & 1.534 $\pm$ 0.018 & 1.019 $\pm$ 0.278 & 0.552 $\pm$ 0.375 & 0.408 $\pm$ 0.404 \\
DIP & 1.470 $\pm$ 0.022 & 1.452 $\pm$ 0.054 & 1.431 $\pm$ 0.074 & 1.360 $\pm$ 0.103 \\

\hdashline \noalign{\vskip 0.5ex}
Ours & \underline{1.307 $\pm$ 0.194} & \underline{0.532 $\pm$ 0.277} & \underline{0.342 $\pm$ 0.227} & 0.255 $\pm$ 0.200 \\
Ours + LS & \textbf{1.181 $\pm$ 0.276} & \textbf{0.394 $\pm$ 0.234} & \textbf{0.273 $\pm$ 0.175} & \textbf{0.206 $\pm$ 0.152} \\
\bottomrule
\end{tabular}
}
\vspace{-1em}
\end{table}

\begin{table}[htbp]
\centering
\small
\setlength{\tabcolsep}{3pt}
\caption{Phase errors for \textbf{Random} phase recovery (mean $\pm$ std) at \textbf{22\,dB SNR}. Our method consistently achieves the best or second-best results. (Best bold, second-best underlined).}
\label{supp:tab:phase_error_rand_snr_22}
\resizebox{0.8\textwidth}{!}{
\begin{tabular}{lllll}
\toprule

\diagbox{Method}{\#Meas.} & 8 & 16 & 32 & 64 \\
\midrule
GD-Rand & \underline{0.505 $\pm$ 0.142} & \textbf{0.138 $\pm$ 0.148} & \textbf{0.079 $\pm$ 0.152} & \textbf{0.043 $\pm$ 0.126} \\
GD-Spec & 1.481 $\pm$ 0.064 & 0.691 $\pm$ 0.367 & 0.449 $\pm$ 0.440 & 0.148 $\pm$ 0.203 \\
PnP-FISTA & 1.511 $\pm$ 0.030 & 0.735 $\pm$ 0.329 & 0.425 $\pm$ 0.264 & 0.364 $\pm$ 0.233 \\
WISH & 1.519 $\pm$ 0.027 & 0.854 $\pm$ 0.320 & 0.453 $\pm$ 0.377 & 0.285 $\pm$ 0.377 \\
DIP & 1.423 $\pm$ 0.063 & 1.495 $\pm$ 0.037 & 1.447 $\pm$ 0.065 & 1.384 $\pm$ 0.104 \\

\hdashline \noalign{\vskip 0.5ex}
Ours & 0.659 $\pm$ 0.236 & \underline{0.156 $\pm$ 0.099} & \underline{0.094 $\pm$ 0.075} & \underline{0.073 $\pm$ 0.066} \\
Ours + LS & \textbf{0.492 $\pm$ 0.273} & 0.158 $\pm$ 0.075 & 0.121 $\pm$ 0.055 & 0.107 $\pm$ 0.054 \\
\bottomrule
\end{tabular}
}
\end{table}

\begin{table}[htbp]
\centering
\small
\setlength{\tabcolsep}{3pt}
\caption{Phase errors for \textbf{Random} phase recovery (mean $\pm$ std) at \textbf{13\,dB SNR}. Our method consistently achieves the best or second-best results. (Best bold, second-best underlined).}
\label{supp:tab:phase_error_rand_snr_13}
\resizebox{0.8\textwidth}{!}{
\begin{tabular}{lcccc}
\toprule

\diagbox{Method}{\#Meas.} & 8 & 16 & 32 & 64 \\
\midrule
GD-Rand & \textbf{0.642 $\pm$ 0.210} & \textbf{0.215 $\pm$ 0.205} & \textbf{0.126 $\pm$ 0.159} & \textbf{0.092 $\pm$ 0.173} \\
GD-Spec & 1.485 $\pm$ 0.055 & 0.842 $\pm$ 0.383 & 0.495 $\pm$ 0.377 & 0.360 $\pm$ 0.331 \\
PnP-FISTA & 1.522 $\pm$ 0.027 & 0.951 $\pm$ 0.340 & 0.595 $\pm$ 0.339 & 0.390 $\pm$ 0.275 \\
WISH & 1.533 $\pm$ 0.021 & 0.995 $\pm$ 0.277 & 0.534 $\pm$ 0.362 & 0.370 $\pm$ 0.385 \\
DIP & 1.426 $\pm$ 0.061 & 1.499 $\pm$ 0.036 & 1.446 $\pm$ 0.068 & 1.368 $\pm$ 0.086 \\

\hdashline \noalign{\vskip 0.5ex}
Ours & 1.283 $\pm$ 0.173 & 0.555 $\pm$ 0.299 & 0.360 $\pm$ 0.246 & 0.254 $\pm$ 0.176 \\
Ours + LS & \underline{1.173 $\pm$ 0.270} & \underline{0.407 $\pm$ 0.266} & \underline{0.283 $\pm$ 0.208} & \underline{0.195 $\pm$ 0.125} \\

\bottomrule
\end{tabular}
}
\end{table}

\begin{figure}[ht]
    \centering
    \includegraphics[width=0.95\linewidth]{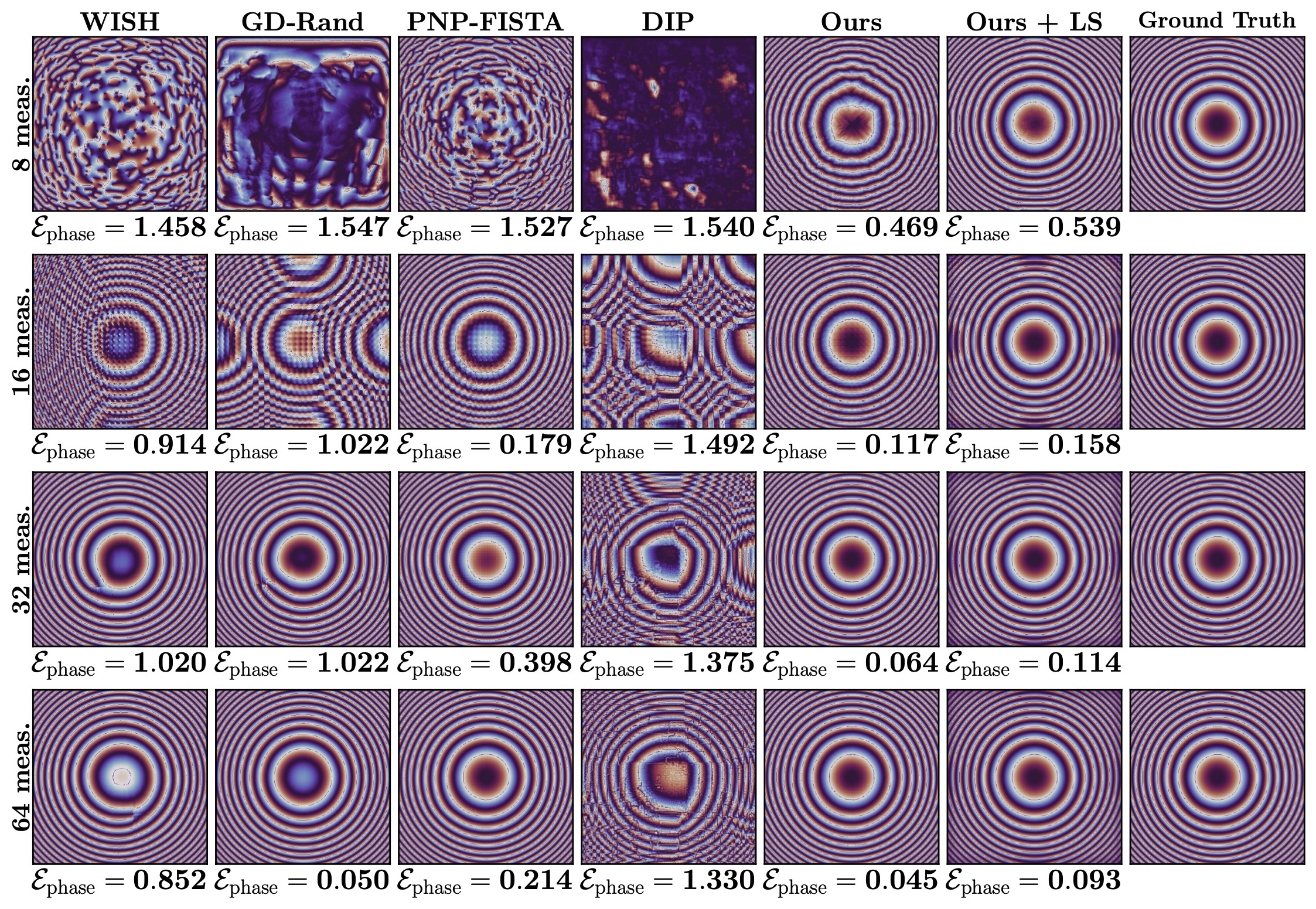}
    \caption{Quadratic phase recovery at different number of measurements at $22$dB SNR. Our method consistently yields the best performance both quantitatively and qualitatively.}
    \label{supp:fig:quad_snr22}
\end{figure}

\begin{figure}[t]
    \centering
    \includegraphics[width=0.95\linewidth]{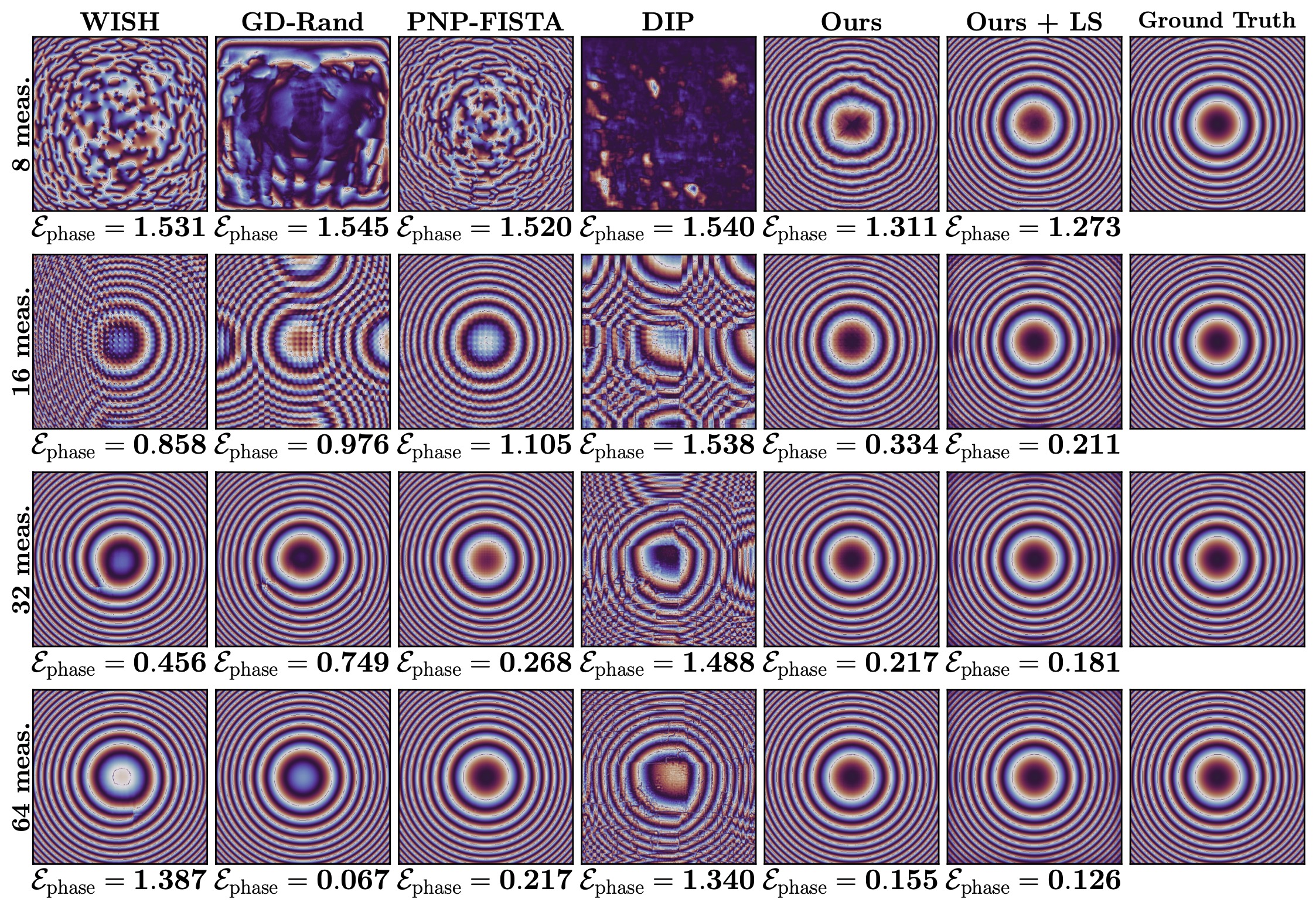}
    \caption{Quadratic phase recovery at $13$ dB SNR. Our method is the only approach that reliably reconstructs the phase profile using only $8$ measurements.}
    \label{supp:fig:quad_snr13}
    \vspace{-1em}
\end{figure}

\begin{figure}[ht]
    \centering
    \includegraphics[width=0.95\linewidth]{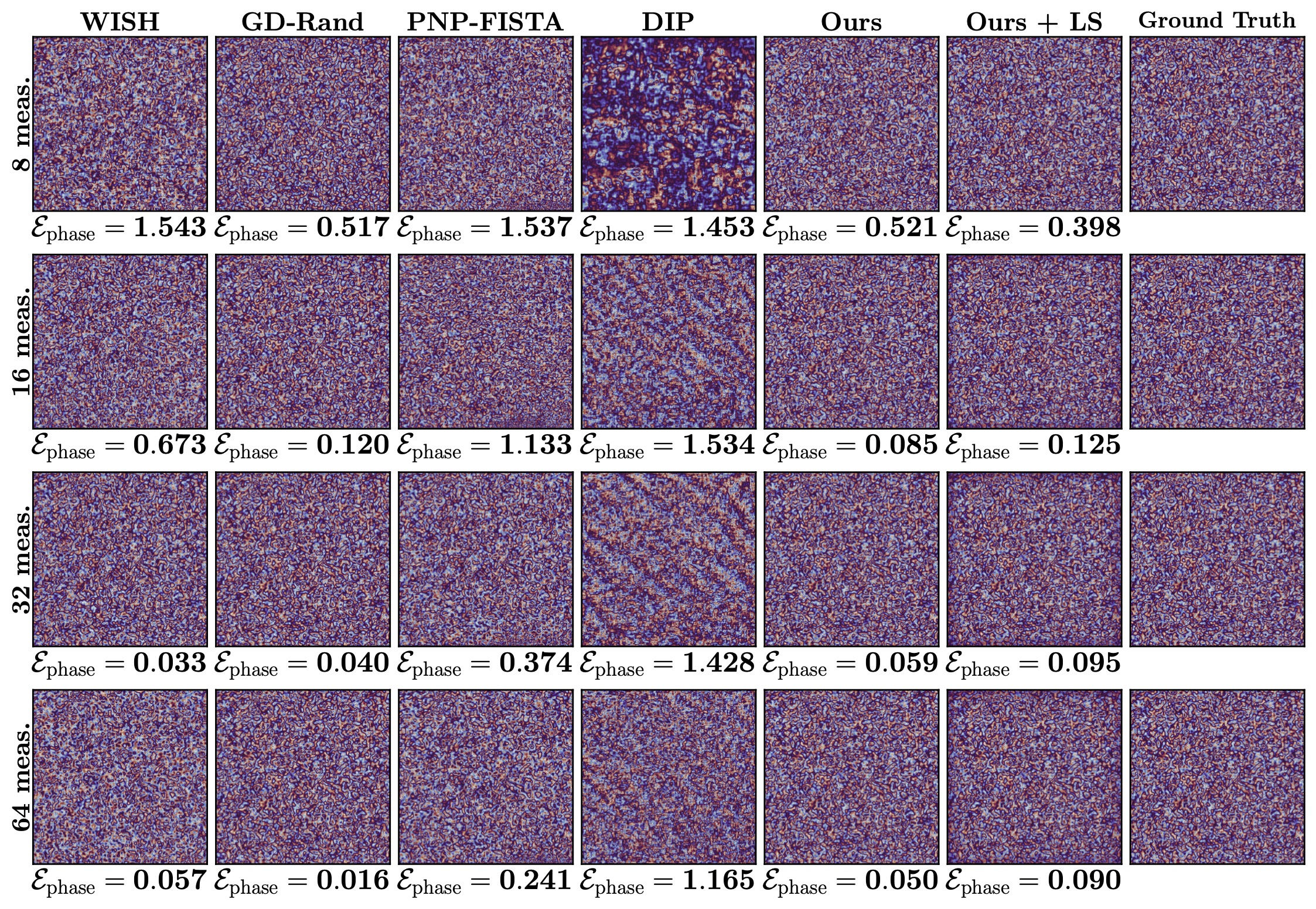}
    \caption{Random phase recovery at $22$ dB SNR with different numbers of measurements. Our method with least-squares refinement achieves the second-best performance across most measurement settings.}
    \label{supp:fig:random_snr22}
\end{figure}

\begin{figure}[ht]
    \centering
    \includegraphics[width=0.95\linewidth]{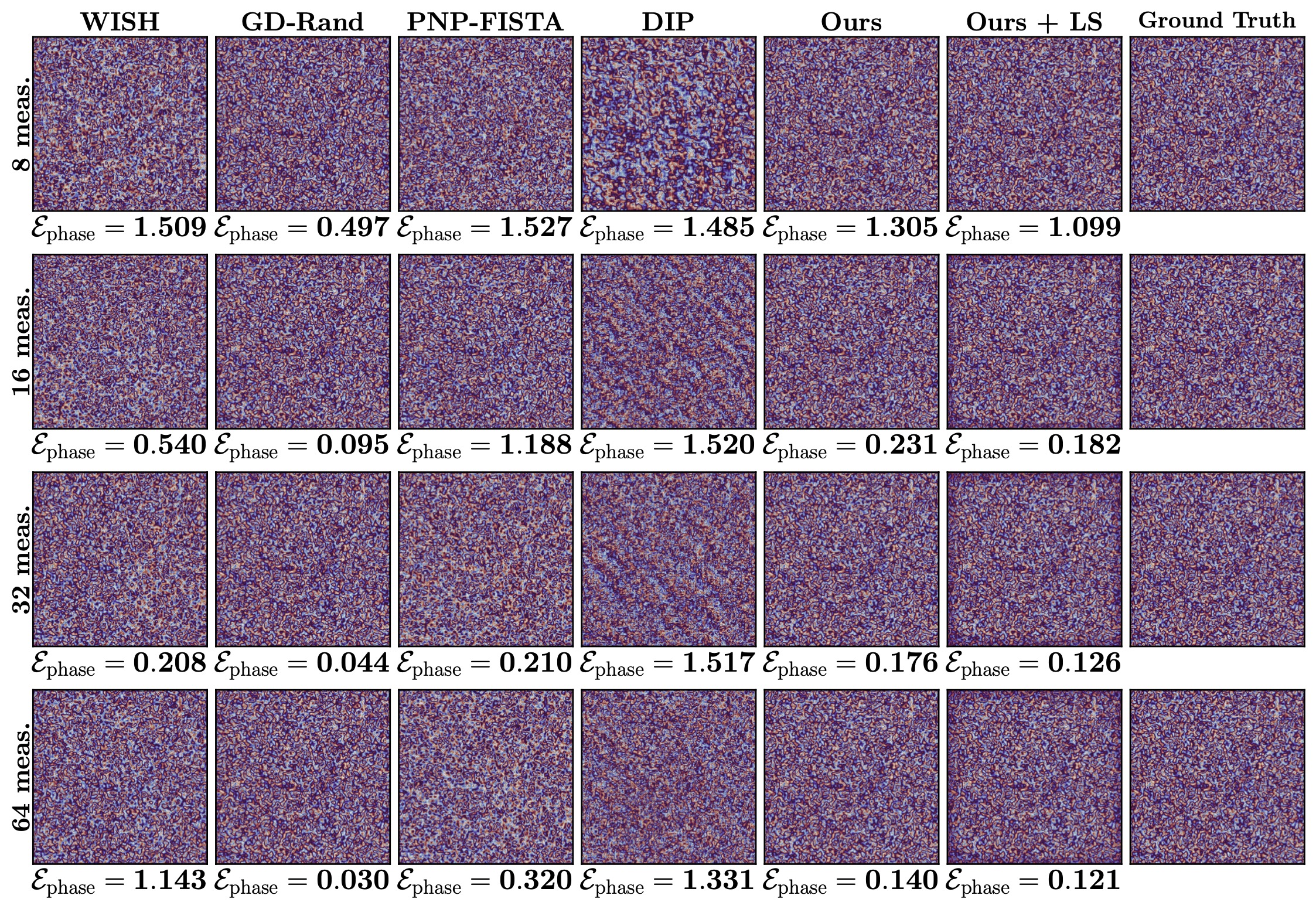}
    \caption{Random phase recovery at $13$ dB SNR with different numbers of measurements. Our method with least-squares refinement achieves the second-best performance across most measurement settings.}
    \label{supp:fig:random_snr13}
    \vspace{-1em}
\end{figure}

\begin{figure}[ht]
    \centering
    \includegraphics[width=0.95\linewidth]{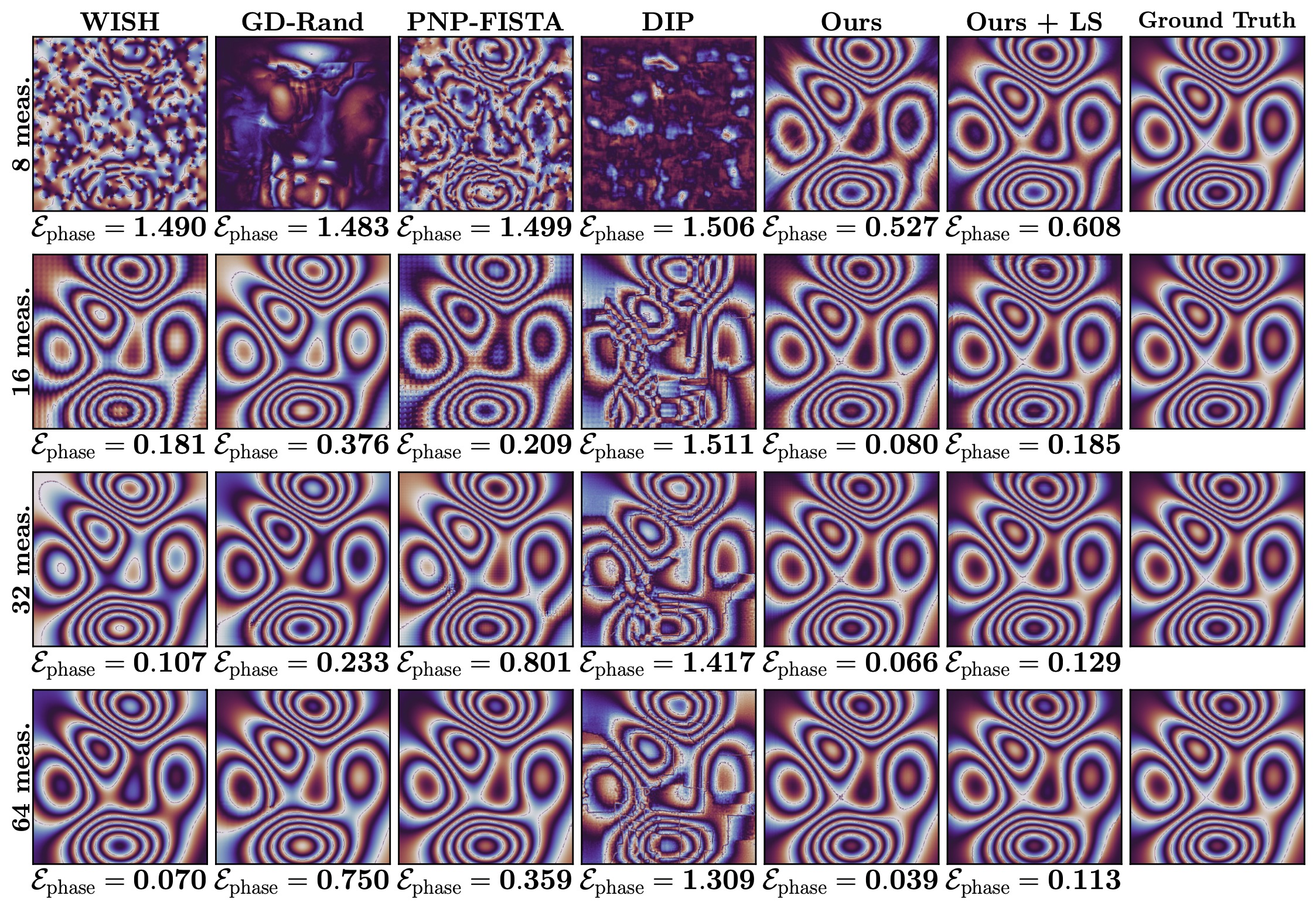}
    \caption{Peak phase recovery at $22$ dB SNR with different numbers of measurements. Our method achieved the best performance across all settings.}
    \label{supp:fig:peak_snr22}
     \vspace{-1em}
\end{figure}

\begin{figure}[ht]
    \centering
    \includegraphics[width=0.95\linewidth]{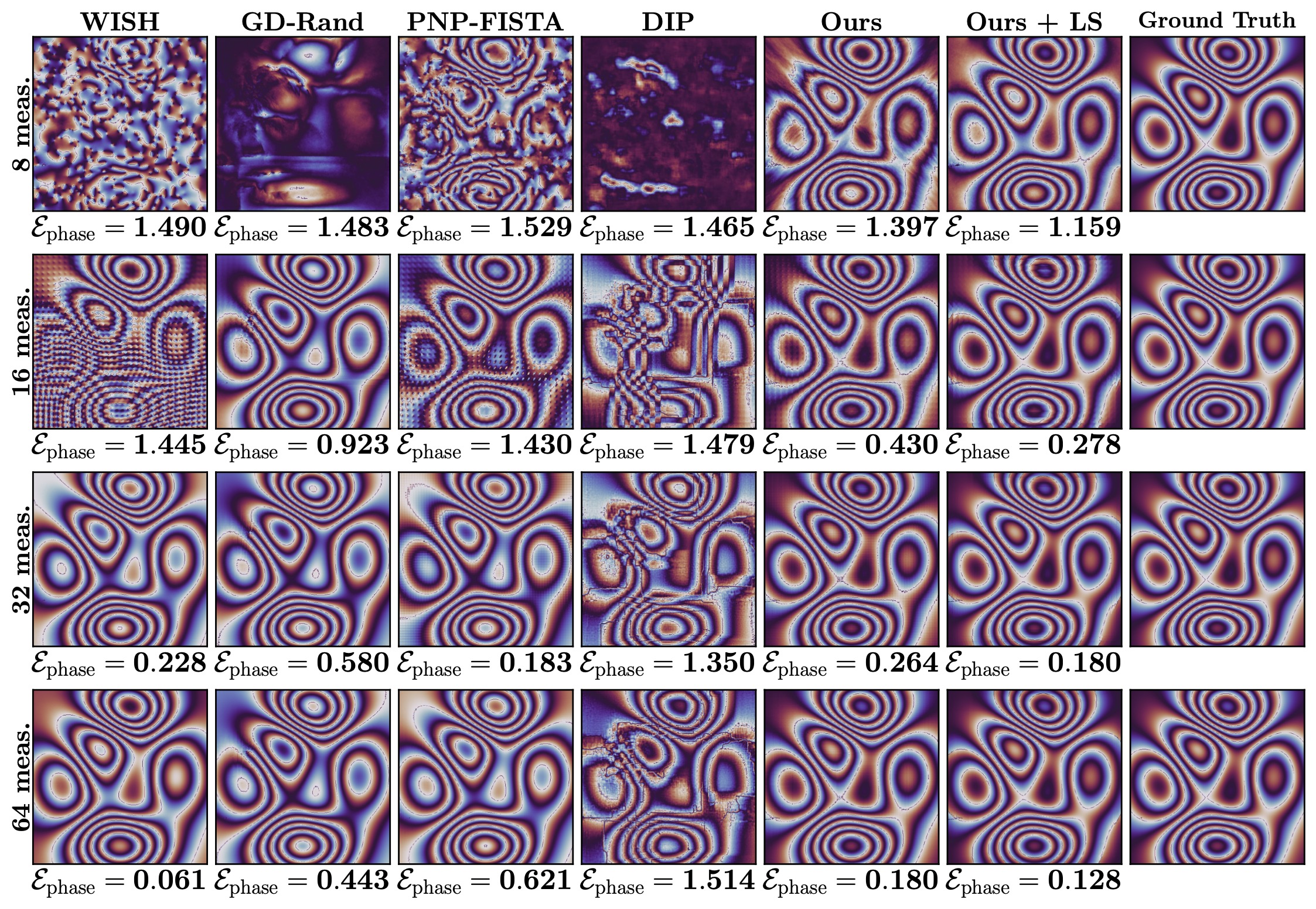}
    \caption{Peak phase recovery at $13$ dB SNR with different numbers of measurements. Our method achieves the best performance in most settings and is the only method that produces a consistent phase profile with very few measurements.}
    \label{supp:fig:peak_snr13}
\end{figure}

\end{document}